\documentclass{article}

\usepackage{arxiv}

\usepackage[utf8]{inputenc} 
\usepackage[T1]{fontenc}    
\usepackage{hyperref}       
\usepackage{url}            
\usepackage{amsfonts}       
\usepackage{graphicx}
\usepackage{amsthm}
\usepackage{amsmath}
\usepackage{amssymb}
\usepackage{natbib}
\usepackage{caption}
\usepackage{subcaption}
\usepackage{enumerate}
\usepackage{color}
\usepackage{multirow}
\usepackage{placeins}
\usepackage{longtable}

\usepackage{tikz}
\usetikzlibrary{automata, arrows.meta, positioning}

\usepackage[ruled]{algorithm2e}
\SetKwInput{KwInput}{Input}     
\SetKwInput{KwOutput}{Output}
\SetKwBlock{UpdateGraphBlock}{Update Graph:}{}
\usepackage{setspace}

\newtheorem{theorem}{Theorem}
\newtheorem{lemma}[theorem]{Lemma}
\newtheorem{proposition}[theorem]{Proposition}
\theoremstyle{definition}
\newtheorem{definition}[theorem]{Definition}
\newtheorem{remark}[theorem]{Remark}
\newtheorem*{example}{Example}

\newcommand{\rk}{\mathbb{R}^m}

\title{Informative
simultaneous
confidence intervals
in graphical group sequential test procedures}
\author{
 Liane Kluge\\
  Competence Center for Clinical Trials Bremen\\
  University of Bremen \\
  \texttt{liane@uni-bremen.de}\\
  \And
 Werner Brannath \\
  Institute for Statistics and\\
  Competence Center for  Clinical Trials Bremen\\
  University of Bremen \\
  \texttt{brannath@uni-bremen.de}\\
}

\begin{document}
\maketitle
\begin{abstract}
Test procedures for multiple hypotheses in a group sequential clinical trial that control the family-wise error rate are considered. Several graphical group sequential tests suggested in the literature, which are special cases of Bonferroni-closure tests, are discussed. The focus is on the question of whether to consider at the current stage only the evidence of the current repeated p-value or the evidence over all repeated p-values from the previous stages. A new test strategy controlling the family-wise error rate is introduced that consistently works across all hypotheses, with the evidence (i.e., repeated p-value) from the current stage. The strategy is more powerful than similar previously suggested test procedures. This is achieved by using the evidence from previous stages to increase the significance levels. For the test procedures, corresponding compatible simultaneous confidence intervals are presented, having the disadvantage of often not providing additional information on the treatment effects. For this reason, we extend previous work about informative simultaneous confidence intervals for one-stage graphical tests to graphical group sequential trials. Iterative algorithms are introduced that calculate these informative bounds that have a small power loss compared to the original graphical group sequential test. The boundaries can be calculated after each stage. In addition, previous work is extended by a criterion to estimate the accuracy of the numerically calculated boundaries. The suggested informative bounds can be used to provide median-conservative, i.e., reliable estimators, for estimating the treatment effects in a group sequential test with multiple hypotheses. 
\end{abstract}

\keywords{Simultaneous confidence intervals, graphical testing procedure, group sequential test, informative confidence intervals, median conservative estimators}

\section{Introduction}
For clinical trials with multiple hypotheses and strong control of the family-wise error rate (FWER), many strategies have been suggested. Bonferroni-based closure tests offer the advantage of achieving higher power compared to simple and well-known Bonferroni-adjusted tests. \citealp{bretz2009graphical} suggested a graphical approach to perform Bonferroni-type tests based on the closure principle. Such designs have several advantages. 
The graphical tests allow for a visualization of hierarchies and preferences among the different null hypotheses. One special case is the fixed-sequence test with two hypotheses: first one hypotheses (e.g.\ for the primary endpoint) is tested at the overall significance level $\alpha$. If it can be rejected, the level $\alpha$ is shifted to the second hypothesis (e.g.\ for the secondary endpoint) so that it can be
tested at level $\alpha$. Otherwise, both hypotheses are accepted. There is also interest in testing multiple hypotheses repeatedly in time under strong control of the family-wise error rate (FWER). Group sequential designs for single null hypotheses that incorporate at least one interim analysis have already been extensively studied, see, for instance, \citealp{BuchWassmerBrannath2025}. Naively borrowing testing strategies from multiple tests for testing multiple hypotheses repeatedly in time can cause inflations of the FWER. \citealp{glimm2010hierarchical} and \citealp{tamhane2010testing} have discussed strategies for the fixed-sequence test. \citealp{maurer2013multiple} suggest for all graphical tests from \citealp{bretz2009graphical} a extension to group sequential trials. 
Two strategies are considered: One only uses the \emph{repeated p-value} of the currently conducted stage, and the other one uses the so-called \emph{sequential p-value}, which is defined as the minimum over all repeated p-values up to the conducted stage. In this paper, we will briefly discuss these strategies with regard to the question, how the data from the previous stages should be used at the current stage of data collection. In particular, this leads us to a new suggested testing strategy for multiple group sequential trials that uses for each hypothesis the final stage p-value to decide whether to reject the hypothesis, but the evidence from all repeated p-values over all stages and hypotheses is used to increase the corresponding significance level of the test. 

This paper deals with simultaneous confidence intervals for graphical group sequential tests, such as those proposed in \citealp{glimm2010hierarchical} and \citealp{maurer2013multiple} and the new suggested strategy. In addition, suitable estimators are considered. We restrict ourselves to a multiple testing problem of left-sided hypotheses $H_j:\theta_j\leq 0$, $1\leq j\leq m$, and simultaneous confidence intervals (SCI) that are bounded from the left $\text{SCI}=(L_1,\infty)\times\dots\times(L_m,\infty)$. Due to the possibility of shifting and inverting, the results can easily be extended to other one-sided null hypotheses. By intersecting confidence regions, we can also obtain SCIs for two-sided hypotheses.

In general, finding simultaneous confidence intervals for multiple testing strategies with strong control of the familywise error rate that are both compatible and informative is difficult. \emph{Compatibility} means that the confidence bounds yield the same test decisions as the given multiple test (i.e. $H_j$ is rejected if and only if $L_j\geq 0$). The term \emph{informative} refers to bounds which strictly increase with increasing evidence against the corresponding null hypotheses (unless none of the gatekeeper hypotheses were rejected), a property compatible confidence intervals often fail to have. For single-stage tests it is natural to define the evidence against a hypothesis as increasing when the corresponding (un-adjusted) p-value decreases. In this case, informative SCI bounds would almost always provide more information on $\theta_j$ (i.e.\ $L_j> 0$) if $H_j$ but not all null hypothesis are rejected. 
For multiple tests that are more powerful than the Bonferroni-adjusted test, for instance, those based on the closure principle, constructing SCI bounds that are both compatible and informative poses a problem. This also applies to the graphical group sequential tests introduced in \citealp{maurer2013multiple} since they are special cases of Bonferroni-based closure tests. One can easily use the results from \citealp{SB08} to obtain compatible simultaneous confidence bounds that will be listed in this paper. Unfortunately, the bounds remain uninformative in the sense that the borders of rejected hypotheses remain stuck at the boundary of the null hypothesis as long as not all hypotheses have been rejected. In \citealp{informativeSCIBrannathKlugeScharpenberg} we have suggested informative simultaneous confidence intervals based on single-stage graphical tests.

The aim of this paper is to extend the concept and method of informative confidence intervals suggested in \citealp{informativeSCIBrannathKlugeScharpenberg} to graphical group sequential trials, like e.g.\ in  \citealp{maurer2013multiple}. Similar to \citealp{informativeSCIBrannathKlugeScharpenberg}, we will see that the rejections made by the graphical algorithms cannot exactly be reproduced by our procedure, but our simultaneous confidence bounds will always be informative for all hypotheses (except for those with gatekeepers).
This requires a reasonable definition of \emph{evidence} for the graphical group sequential tests, which will also be provided in this paper. The informativeness yields that the borders do not stack at the borders of the hypotheses, except for the negligible case that the repeated or sequential p-value is equal to the final local level. To the best of our knowledge, this is not possible for the original graphical group sequential tests.

As for the single-stage case, our bounds depend on an information weight $q\in(0,1)$. The weight indicates the trade-off between informativeness and compatibility with the original group sequential graph. If the information weight tends to zero, the test decisions given through the information bounds become increasingly similar to those of the underlying group sequential graph. For greater information weights, one could roughly say that the expected size of the confidence bounds and, therefore, the informativeness increase, at the expense of the rejections made. The rejections increasingly differ from those of the original group sequential graph.

Although there is no compatibility with the original graph, our simultaneous confidence bounds are, of course, consistent with the test implied by them. 
Moreover, they all have the monitoring property, i.e.\ they can be calculated at every stage, irrespective of any stopping rule. 

For the calculation of the informative bounds, we will give iterative algorithms that all yield (conservative) lower and (anti-conservative) upper approximations of the simultaneous confidence bounds. The algorithms are based on the algorithm from \citealp{informativeSCIBrannathKlugeScharpenberg}, which is supplemented by also calculating an upper approximation. The calculation of an upper approximation is new and will be mathematically justified. The upper approximation offers the chance to estimate the accuracy of the numerical approximation of the lower bounds. This is of practical interest. 

As we will see, all the presented strategies for obtaining (compatible or) informative SCIs can also be used to construct common median-conservative estimators for the treatment effects. The need for unbiased estimations in adaptive (and group sequential) trials is an important topic as mentioned in (a draft of the) \citealp{ICH-E20} guideline.

In Section~\ref{MainChapterGSDGraphical}, we describe the underlying setup of this paper and the new suggested testing strategy. This includes the graphical tests introduced in \citealp{maurer2013multiple} for group sequential trials. Further, we will give a formal definition of (informative) simultaneous confidence intervals and appropriate definitions of increasing evidence. Section~\ref{SubsectionCompatibleBounds} deals with compatible simultaneous confidence bounds for the graphical group sequential designs. In Section~\ref{mainChapterInformativeSCIs}, the concept of the construction of informative SCI bounds is introduced, and iterative algorithms for the calculation of informative bounds are listed. Section~\ref{SectionMedianConservatEstimators} emphasizes that meaningful median conservative estimators of the treatment effects can be defined using our algorithms for calculating informative bounds. 
We conclude with a summary and discussion in Section~\ref{sectionSummaryAndDiscussion}.
\section{Graphical tests and informativeness for GSD}\label{MainChapterGSDGraphical}
\subsection{Set-up}\label{subsectionGraphicalGSDSetUp}
Graphical tests (for $m\in\mathbb{N}$ hypotheses) are more powerful than the (weighted) Bonferroni test. They are based on the closure principle using weighted Bonferroni tests for the intersection hypotheses and provide strong control of the FWER for a $\alpha\in(0,1)$ (see \citealp{bretz2009graphical}). For performing such a graphical test, one needs to specify $m$ hypotheses $H_j$, $j\in\{1,\dots,m\}$. We will work in this paper with the left-sided hypotheses $H_j=H_j^0:\theta_j\leq 0$. Additionally,  initial local significance levels $\alpha_j=\alpha_j(I)=\omega_j(I)\cdot\alpha\geq 0$, $I=\{1,\dots,m\}$, such that $\sum_{j=1}^m\alpha_j=\alpha$ need to be defined. They can be visualized with nodes with the weight $\alpha_j$. In addition, a transition matrix $(g_{ij})_{1\leq i,j\leq m}$ with $g_{i,j}\in[0,1]$ where each row sum is at most $1$ needs to be specified. The transition weights can be visualized by arrows between the nodes, indicating the percentage of the (initial) level that is shifted after rejecting a hypothesis to the other ones. One example is the hierarchical test, for instance, with four hypotheses (see Figure~\ref{PictureHTest}). The hierarchical test is characterized by $\alpha_1(I)=\alpha_1=\alpha$, $\alpha_j(I)=\alpha_j=0$, $j=2,3,4$, $I=\{1,2,3,4\}$ and $g_{1,2}=g_{2,3}=g_{3,4}=1$ and $g_{\ell,i}=0$ for all other transition weights.
\begin{figure}[h]
\begin{center}
\begin{tikzpicture} [node distance = 2cm, on grid, auto]

\node[label={$\alpha$}] (H1) [state] {$H_1$};
\node[label={$0$}] (H2) [state, right = of H1] {$H_2$};
\node[label={$0$}] (H3) [state, right = of H2] {$H_3$};
\node[label={$0$}] (H4) [state, right = of H3] {$H_4$};

\path [-stealth, thick]
	(H1) edge  node {$1$}    (H2)
	(H2) edge  node {$1$}    (H3)
    (H3) edge  node {$1$}    (H4);
	
\end{tikzpicture}
\end{center}\caption{Hierarchical test with 4 hypotheses.}\label{PictureHTest}
\end{figure}

The levels and transition weights are updated after each rejection step by a pre-specified fixed rule. The corresponding current level at which a non-rejected hypothesis is tested is given by $\alpha_j(J)=\omega_j(J)\cdot\alpha$ where $J\subseteq I$ is the index set of not yet rejected hypotheses. The update rules can be found in \citealp{bretz2009graphical} and are also given by the ``Update Graph'' block in Algorithm~\ref{AlgorithmusMauerBretz}. The weights of the Bonferroni tests fulfil a monotonicity condition (namely $\omega_j(J)\leq\omega_j(J^{\prime})$ for $J^{\prime}\subseteq J\subseteq I$, $j\in I$) that implies \emph{consonance} of the closure test (see \citealp{hommel2007powerful}). This implies that the graphical test needs at most $m$ steps. The graphical group sequential tests from \citealp{maurer2013multiple} maintain the consonance property. The tests use group sequential weighted Bonferroni tests for the intersection tests, where for each hypothesis and stage, the \emph{nominal} significance levels are increasing in the overall significance level. 
    
More precisely, we assume in the following: We have an underlying graphical test as described above. Additionally, the conducted trial for testing the (multiple) hypotheses $H_j$, $j\in\{1,\dots,m\}$, incorporates $K\in\mathbb{N}$ analysis time points $1,\dots,K-1,K$ ($K-1$ interim looks and the final analysis $K$). For each hypothesis $H_j$, $j\in\{1,\dots,m\}$, it is assumed that the data collected up to time point $k$ can be summarized by the (asymptotic) normal test statistic $Z_{j,k}\sim\mathcal{N}(\theta_j\sqrt{I_{j,k}}\, ,1)$ with corresponding p-value $p_{j,k}=p_{j,k}(0)=1-\Phi(Z_{j,k})=1-\Phi(\hat{\theta}_{j,k}/\mathrm{SE}_{j,k})$ for testing $H_j=H_j^0$ where $\hat{\theta}_{j,k}/\mathrm{SE}_{j,k}=Z_{j,k}$. The relative amount of statistical information available at time point $k$, $1\leq k\leq K$, for hypothesis $H_j$, $j\in\{1,\dots,m\}$, is denoted by $t_{j,k}=I_{j,k}/I_{j,K}\in[0,1]$ (\emph{normalized look time}). In group sequential trials, the covariances of the test statistics $Z_{j,k}$ at different points $1\leq k\leq K$ in time are utilised. Typically, $\mathrm{cov}(Z_{j,k}, Z_{j,k^{\prime}})=\sqrt{I_k/I_{k^{\prime}}}$ for $1\leq k\leq k^{\prime}\leq K$. For a group sequential test at significance level $\gamma\in(0,1]$, for each interim analysis and the final analysis, a spent level $\alpha_{j,k}(\gamma)$ must be specified, summing up to $\gamma$ across the stages. Because it is usually difficult or even impossible to achieve the pre-planned sample sizes, it makes sense to specify such spent levels via a spending function $a(\gamma,t_{j,k})$, $t_{j,k}\in[0,1]$ that is increasing in $t_{j,k}$ with $a(\gamma,0)=0$ and $a(\gamma,1)=\gamma$. Now, the decision boundaries or the \emph{nominal levels} $\alpha_{j,k}^{*}(\gamma)$, $1\leq k\leq K$, are defined step by step through the stages via
\begin{align*}
\mathbb{P}_0\left(\{p_{j,k}(0)\leq \alpha_{j,k}^*(\gamma)\}\cap\bigcap\nolimits_{s=1}^{k-1}\{p_s >\alpha_{j,k}^{*}(\gamma)\}\right)=a(\gamma, t_{j,k})-a(\gamma, t_{j,k-1})=\alpha_{j,k}(\gamma)~.
\end{align*}
Throughout this paper, we assume no binding stop for futility at each stage. 

A hypothesis $H_j$ can be rejected at stage $k$ of a group sequential test under control of the level $\gamma$ if for the local p-value it holds $p_{j,k}(0)\leq\alpha_{j,k}^*(\gamma)$. If for all $\gamma\in(0,1]$ such nominal levels are defined and they are \emph{increasing} and \emph{continuous} in $\gamma$, one can also work with \emph{adjusted p-Values} at stage $k$ like the \emph{repeated p-values} defined by $p_{j,k}^r(0)=\sup\{\gamma:p_{j,k}(0) >\alpha_{j,k}^*(\gamma)\}$ (see \citealp{posch2008note}). Then, a rejection at stage $k$ is equivalent to $p_{j,k}^r(0)\leq\gamma$. If one is interested in a rejection at any time point up to $k$, $1\leq k\leq K$, one can work with the adjusted \emph{sequential p-value} $p_{j,k}^s(0)=\min\{1\leq s\leq k:p_{j,k}^r(0)\}$, $j\in\{1,\dots,m\}$ (see \citealp{liu2008adaptive}). Then, $p_{j,s}^r(0)\leq\alpha_{j,s}^*(\gamma)$ for any $s\leq k$, is equivalent to $p_{j,k}^s(0)\leq\gamma$. 

For the graphical group sequential tests from \citealp{maurer2013multiple}, it is crucial that the nominal levels $\alpha_{j,k}^*(\gamma)$ are increasing and continuous in $\gamma\in(0,1]$. For the original one-stage graph it holds $\omega_j(J)\cdot\alpha\leq\omega_j(J^{\prime})\cdot\alpha$ for $j\in J^{\prime}\subseteq J\subseteq I$ and with the monotonicity property of the nominal levels, it follows $\alpha_{j,k}^{*}(\omega_j(J)\cdot\alpha)\leq \alpha_{j,k}^{*}(\omega_j(J^{\prime})\cdot\alpha)$, $1\leq k\leq K$. When working with the repeated or sequential p-values (in the case of monotonicity and continuity), it is also immediately apparent that the monotonicity is maintained. In this paper, we even work with the assumption that the nominal levels are strictly increasing (and continuous) in $\gamma\in (0,1]$. This is required to guarantee continuity properties and monotonicity of the repeated or sequential p-values when the observed sample remains the same, and one considers the p-values for the shifted null hypotheses $H_j^{\mu_j}:\theta_j\leq\mu_j$, $\mu_j\in\mathbb{R}$, $j\in\{1,\dots,m\}$. In \citealp{maurer2013multiple} it was shown that for $\gamma^{\prime}\geq\gamma$ it holds: If for all $s=1,\dots,k$ the spent levels fulfil $\alpha_{j,s}^*(\gamma^{\prime})\geq \alpha_{j,s}^*(\gamma)$ this implies $\alpha_{j,k}^*(\gamma^{\prime})\geq\alpha_{j,k}^*(\gamma)$ of the nominal levels, for a $1\leq k\leq K$, $j\in\{1,\dots,m\}$, $\gamma,\gamma^{\prime}\in (0,1]$. It can easily be seen from the corresponding proof in \cite{maurer2013multiple} that if the spent levels $\alpha_{j,s}(\gamma)$ are \emph{strictly} increasing in $\gamma$ across all analyses $s=1,\dots,k$ this also holds for the nominal significance levels $\alpha_{j,s}^{*}(\gamma)$. As mentioned in \citealp{maurer2013multiple}, for the \emph{Pocock} boundaries mimicked by the spending function $a(\gamma,t)=\gamma\ln(1+(e-1)t)$ (see \citealp{lan1983discrete}) and the so-called power spending functions $a(\gamma, t)=\gamma\cdot t^{\rho}$, $\rho>0$ (see \citealp{kim1987design}) the spent levels and thus the nominal levels are indeed strictly increasing in $\gamma$. For the O'Brien-Fleming (OBF) boundaries mimicked by $a(\gamma, t)=2(1-\Phi(\Phi^{-1}(1-\gamma/2)/\sqrt{t}))$ (see \citealp{lan1983discrete}) the monotonicity could be proven on $(0,0.318)$. 

In this paper, we assume like in \citealp{informativeSCIBrannathKlugeScharpenberg} that the standard errors $\mathrm{SE}_{j,k}$ introduced above are independent from the true $\theta\in\mathbb{R}^m$ and consider for each $\mu_j\in\mathbb{R}$ the shifted local p-value $p_{j,k}(\mu_j)=1-\Phi((\hat{\theta}_{j,k}-\mu_j)/\mathrm{SE}_{j,k})$, for the shifted hypotheses $H_j^{\mu_j}:\theta_j\leq\mu_j$, $j\in\{1,\dots,m\}$, $1\leq k\leq K$. This allows us to define the family of \emph{repeated} and \emph{sequential} p-values 
\begin{align}
    p_{j,k}^r(\mu_j)=\sup\{\gamma: p_{j,k}(\mu_j)>\alpha_{j,k}^{*}(\gamma)\}\quad\text{and}\quad p_{j,k}^s(\mu_j)=\min\{p_{j,s}^r(\mu_j):1\leq s\leq k\}~,\quad \mu_j\in\mathbb{R}, \label{DefinitionRepeatedSequentialPValues}
\end{align}
for $1\leq j\leq m$ and $1\leq k\leq K$. If the nominal levels are strictly increasing and continuous in $\gamma\in(0,1]$ as discussed above, this yields the following properties of the repeated and sequential p-values: Let $\lambda\in\{r,s\}$. The p-values $\mu_j\mapsto p_{j,k}^{\lambda}(\mu_j)$ are continuous and strictly increasing on $(-\infty, \min\{\mu_j:p_{j,k}(\mu_j)\geq\alpha_{j,k}^*(1)\}]$ with $p_{j,k}^{\lambda}(\mu_j)\overset{\mu_j\rightarrow\infty}{\longrightarrow}1$ and $p_{j,k}^{\lambda}(\mu_j)\overset{\mu_j\rightarrow -\infty}{\longrightarrow}0$. 

\subsection{Group sequential trials using graphical approaches}\label{subsectionGraphicalGSD}
The following algorithm combines the two methods from \citealp{maurer2013multiple} and generalizes it to the case where individual hypotheses are not considered beyond some stage even if not rejected, i.e. is stopped for futility. 

\begin{algorithm}[h]
\setstretch{1.1}
\caption{Graphical test for group sequential designs with $\lambda\in\{r,s\}$ from \citealp{maurer2013multiple}}\label{AlgorithmusMauerBretz} 
\KwInput{Group sequential graphical test with $K$ stages for $m$ hypotheses, overall significance level $\alpha$ and nominal significance levels $\alpha_{j,k}^*(\gamma)$, $\gamma\in[0,1]$, $1\leq k\leq K$, $1\leq j\leq m$.}
Set $I=D=\{1,\dots,m\}$, $\bar R_0^{\lambda}=\varnothing$, $k=1$, and for all $1\leq j\leq m$ set $k_j^*=1$\;
\While{$|I|\geq 1$, $|D|\geq 1$ \emph{\textbf{and}} $k\leq K$}{
Continue the trial with the hypotheses $H_j$ for $j\in D$ and compute local p-values $p_{j,k}$ and group sequential p-values $p_{j,k}^{\lambda}$\;
Initialize $\bar R_k^{\lambda}=\bar R_{k-1}^{\lambda}$\;
\While{$\{j\in I: p_{j,k_j^*}^{\lambda}\leq\alpha_j(I)\}\neq\varnothing$}{
Choose a $j\in \{j\in I: p_{j,k_j^*}^{\lambda}\leq\alpha_j(I)\}$\;
\UpdateGraphBlock{
$I\rightarrow I\setminus\{j\}$\;
$\alpha_\ell(I)\rightarrow \begin{cases}
			\alpha_{\ell}(I)+\alpha_j(I)g_{j\ell} & \ell\in I\\
            0 & \text{otherwise}
		 \end{cases}$\;
$g_{\ell i}\rightarrow\begin{cases}
			\frac{g_{\ell i}+g_{\ell j}g_{j i}}{1-g_{\ell j}g_{j\ell}} & \ell,i\in I,\ell\neq i, g_{\ell j}g_{j\ell}<1\\
            0 & \text{otherwise}
		 \end{cases}$\;
}
Reject $H_j$ and update index set of rejected hypotheses up to stage $k$ by $\bar R_k^{\lambda}\rightarrow \bar R_k^{\lambda}\cup\{j\}$\;
}
Decide for each $j\in D$ whether to stop data collection and update in this case $D\rightarrow D\setminus\{j\}$ and set $\tau_j^*=k$\;
Update $k\rightarrow k+1$\;
Update for all $j\in D$ the current stage $k_j^*\rightarrow k$\;
}
Let $\tau^*=\max_{1\leq j\leq m}\{\tau^*_{j}\}=k$ be the stage where the trial is terminated\;
\Return{Index set $\bar R^{\lambda}:=\bar R^{\lambda}_{\tau^*}$ of rejected hypotheses.}
\end{algorithm}
In Algorithm~\ref{AlgorithmusMauerBretz}, we have denoted the index sets of the rejected hypotheses with an overline to distinguish them from the index sets of hypotheses rejected by our informative procedures that will be introduced later. Let us further explain the role of the index set $D$ and $I$. The index set $I$ only refers to the hypotheses that are not rejected so far, and the index set $D$ contains all hypotheses for which the data collection is continued. There could be the case that the data collection for a hypothesis is stopped due to futility reasons, then the corresponding index is no longer contained in $D$ but in $I$. From a practical perspective, it may also be sensible to continue collecting data on some endpoints despite rejection. In this case, $j\notin I$ but $j\in D$. Additionally, it could be attractive to continue the data collection for a rejected hypothesis $H_j$ to obtain more informative simultaneous confidence bounds. For all hypotheses, we denoted the current last stage of data collection $k_j^*=k_j^*(k)$ and the last stage of data collection $\tau_j^*$, $1\leq j\leq m$, as well as the stage $\tau^*$ where the whole trial is terminated with a star to emphasize the dependency on $D$, i.e. the decisions, and the data dependency. 

Algorithm~\ref{AlgorithmusMauerBretz} controls for both $\lambda\in\{r,s\}$ the FWER in the strong sense because the test procedure is always more conservative than performing the graphical test using for all hypotheses the sequential p-value $p_{j,K}^s$, $1\leq j\leq m$. Please note that the graphical test keeps the FWER because it is based on a uniquely defined closure test as shown in \citealp{bretz2009graphical}. In particular, the significance levels $\alpha_{\ell}(I)$, $\ell\in I$, are unique.

We will now illustrate Algorithm~\ref{AlgorithmusMauerBretz} using the example of a fixed-sequence test.

\begin{example}
    We assume a fixed sequence test for $\alpha=0.025$ with four hypotheses $H_j$, $j\in I=\{1,\dots,4\}$, and two stages ($K=2$), where a new treatment is tested in four different doses against a control treatment. Remember, that the hierarchical test is characterized by $\alpha_1(I)=\alpha_1=\alpha$, $\alpha_j(I)=\alpha_j=0$, $j=2,3,4$, $g_{1,2}=g_{2,3}=g_{3,4}=1$ and $g_{\ell,i}=0$ for all other transition weights. First, we consider the case that Algorithm~\ref{AlgorithmusMauerBretz} is executed for $\lambda=r$. The first stage ($k=1$) is conducted for all hypotheses, and the corresponding repeated p-values are calculated. The results are shown in Table \ref{tab:hiertest4Hyp}. At interim, $H_1$ is tested at level $\alpha$ and can be rejected due to $p_{1,1}^r=0.02<0.025=\alpha=\alpha(I)$. Now the graph is updated, and the level $\alpha$ is shifted to $H_2$. This hypothesis cannot be rejected at the first stage because $p_{2,1}^r=0.04>0.025=\alpha(\{2,3,4\})$. This implies that at interim, no level is shifted to $H_3$ and $H_4$, which are also not rejected. The indices of rejected hypotheses after the first stage are $\bar R_{1}^r={1}$. Let the data collection for $H_1$ be stopped now. Thus, the final conducted stage for $H_1$ equals $\tau^*_1:=1$. For all other hypotheses the study is continued to the second, final stage. The second hypothesis $H_2$ can be rejected after the second stage because $p_{2,2}^r=0.02<0.025=\alpha(\{2,3,4\})$. The graph is again updated, and the level $\alpha$ is shifted to $H_3$. The hypothesis is not rejected because $p_{3,2}^r=0.03>\alpha$. Since the final stage has already been reached, the test procedure is stopped, and $H_3$ and $H_4$ remain accepted. It follows that $\tau^*=2$ is the stage where the trial is terminated. The set of the rejected hypotheses is given by $\bar R^r=\bar R_2^r=\{1,2\}$.

    We assume now that Algorithm~\ref{AlgorithmusMauerBretz} is executed for $\lambda=s$. Because for the first stage the sequential p-values match the repeated ones (see also Table \ref{tab:hiertest4Hyp}), we have the same results as for $\lambda=r$. Only $H_1$ can be rejected after the first stage, i.e., $\bar R_1^s=\{1\}$. Again let the data collection for $H_1$ be stopped at interim and for all other hypotheses the second stage is conducted. Because a sequential p-value is defined as the minimum over the repeated p-values of stage $1$ and $2$, the hypothesis $H_2$ can also be rejected because $p_{2,2}^s=0.02<0.025$ and the level $\alpha$ is shifted to $H_3$. Compared to the $\lambda=r$ approach, it is permissible to ``look back'' and reject the hypothesis retrospectively via the first stage. We have $p_{3,2}^s=\min\{p_{3,1}^r,p_{3,2}^r\}=p_{3,2}^r=0.02<\alpha$ and can reject $H_3$. The graph is updated and the level $\alpha$ is shifted to the last hypothesis, which can also be rejected (via the first and second stage). In summary we can reject all hypotheses and receive $\bar R^s=\bar R_2^s=\{1,\dots,4\}$.

\begin{table}[ht]
\centering
\begin{tabular}[t]{c|cc| cc}
$j$ & $p_{j,1}^r$ & $p_{j,2}^r$ & $p_{j,1}^s$ & $p_{j,2}^s$\\
\hline
$1$ & $\mathbf{0.02}$ & $0.03$ & $\mathbf{0.02}$ & $0.03$\\
$2$ & $0.04$ & $\mathbf{0.02}$ & $0.04$ & $\mathbf{0.02}$ \\
$3$ & $0.02$ & $0.03$ & $0.02$ & $\mathbf{0.02}$\\
$4$ & $0.02$ & $0.01$ & $0.02$ & $\mathbf{0.01}$\\
\end{tabular}
\caption{\vspace{1em} Repeated and sequential p-values of fixed-sequence test with $K=2$ stages and level $\alpha=0.025$.}
\label{tab:hiertest4Hyp}
\end{table}%
\end{example}

\citealp{glimm2010hierarchical} and \citealp{maurer2013multiple} prefer the repeated variant of Algorithm~\ref{AlgorithmusMauerBretz} over the sequential variant. The reason is that ``looking back'' and rejecting a hypothesis with a significant result from a previous stage, even if it is no longer significant at the current stage, would not be convincing, and the power gain when allowing to ``look back'', i.e., using the $\lambda=s$ variant, would be very small.

Algorithm~\ref{AlgorithmusMauerBretz} incorporates (for both variants) the convention to consider hypotheses that are rejected at an interim analysis as rejected at all subsequent time points. If the data collection of a hypothesis is still continued despite rejection, the $\lambda=r$ variant is backward-inconsistent: For hypotheses that have already been rejected, one is satisfied with the evidence from the earlier stages, but for hypotheses that have not yet been rejected, the common data up to the current stage must be used. As an illustration, we assume for our example above that after the first stage and the rejection of $H_1$, the second stage is now conducted for all hypotheses, including $H_1$. Assume that a non-significant result $p_{1,2}^r=0.03$ is observed. This is ignored by Algorithm~\ref{AlgorithmusMauerBretz}, and $H_1$ remains rejected. In contrast to this, the third hypothesis $H_3$ could also be rejected retrospectively via the first stage because of $p_{3,1}^r=0.02<0.025$, but this is not accepted because data from the second stage is already available. Due to this backward inconsistency for $\lambda=r$ that arises when continuing to collect data of rejected hypotheses, in the following, we will only consider the case where $D\subseteq I$ in Algorithm \ref{AlgorithmusMauerBretz}.

In comparison to this, the $\lambda=s$ variant is backward-consistent, but of course, it does not work with the common evidence up to the current stage. In the following, we would like to discuss methods that work consistently with the common evidence up to the current stage and that are as powerful as possible. A starting point is our example, where using the common evidence up to the current stage implies for $H_1$ a retest at the second stage that leads to an acceptance of $H_1$. For the other hypotheses, of course, using the evidence up to the current stage means working with the repeated p-values $p_{j,2}^r$, $j=2,3,4$. However, what is not clear at first glance is which significance level should be used for retesting. \citealp{glimm2010hierarchical} suggested proceeding \emph{stagewise hierarchical}. For our example, this means that no significance level can be transferred from $H_1$ to $H_2$ and therefore also not to the other hypotheses. This implies that none of the hypotheses can be rejected. This approach can, of course, be used for all graphical tests and then looks simply as follows: at each stage $k$, $1\leq k\leq K$, the graphical test is restarted using for all hypotheses the repeated p-values corresponding to the actual last stage of data collection $k_j^*=k_j^*(k)$, $j=1,\dots,m$. This strategy uses the cumulative evidence by definition, but is conservative. 

Below, we suggest a more powerful testing strategy formalized by Algorithm~\ref{alg_compromise_compatible}. On the one hand, it always uses the data and thus the evidence up to the current stage to test a hypothesis. On the other hand, however, possible rejections at earlier stages are used to test the other hypotheses at a higher level. The strategy can be understood as a \emph{compromise} between the conservative testing strategy of performing the graphical test solely at the current stage and the liberal sequential graphical test ($\lambda=s$).

\begin{algorithm}[h]
\setstretch{1.2}
\caption{Graphical test for group sequential designs with efficient multiple adjustment} \label{alg_compromise_compatible}
\KwInput{Group sequential graphical test with $K$ stages for $m$ hypotheses, index set $\bar{R}^s$ of hypotheses rejected by the $\lambda=s$ variant of Algorithm~\ref{AlgorithmusMauerBretz} with corresponding last stages of data collection $\tau^*_j$  for each hypothesis $H_j$, p-values $p_{j,\tau^*_j}^{r}$, $1\leq j\leq m$, significance level $\alpha$.}
Set $\bar R^c=\varnothing$ and $I=\{1,\dots,m\}$\;
\For{$j\in\bar R^s$}{
\If{$p_{j,\tau^*_j}^r(0)\leq\alpha_j(\{j\}\cup I\setminus \bar R^s)$}{
\vspace{0.3em}
Reject $H_j$\;
$\bar R^c\rightarrow\bar R^c\cup\{j\}$\;
}
}
Retain all hypotheses $H_j$, $j\in I\setminus\bar R^c$\;
\Return{Index set $\bar R^c$ of rejected hypotheses.}
\end{algorithm}
In order to test all hypotheses at the highest possible level, Algorithm~\ref{AlgorithmusMauerBretz} is performed for $\lambda=s$, which provides a set of rejected hypotheses $\bar{R}^s$. Each of the hypotheses  $H_j$, $j\in\bar{R}^s$, is reconsidered individually and retested at the significance level obtained by the graphical test when keeping all other rejections except for $H_j$. Note that all hypotheses $H_j$ with $j\in I\setminus \bar{R}^s$ remain accepted. If a hypothesis $H_j$ cannot be rejected with the p-value $p_{j,\tau^*_j}^s$ via Algorithm~\ref{AlgorithmusMauerBretz} ($\lambda=s$), a rejection is also not possible with the more conservative p-value $p_{j,\tau^*_j}^r\geq p_{j,\tau^*_j}^s$. The procedure keeps the FWER because this holds for the sequential variant of Algorithm~\ref{AlgorithmusMauerBretz}. For reasons of simplification, Algorithm \ref{alg_compromise_compatible} is formulated as a ``follow-up algorithm'' applied after Algorithm \ref{AlgorithmusMauerBretz} for $\lambda=s$. Of course, one could also execute our follow-up algorithm after each stage $1\leq k\leq\tau^*$ of Algorithm  \ref{AlgorithmusMauerBretz} with $\lambda=s$.

We apply Algorithm~\ref{alg_compromise_compatible} to our example of a hierarchical test with four hypotheses. As argued above, we obtain $\bar{R}^s=\{1,\dots,4\}$. Keeping all rejections except the one under consideration leads to retesting each hypothesis at level $\alpha$ with the final repeated p-value $p_{j,\tau^*_j}^r$, $j=1,\dots, 4$. Using the results from Table \ref{tab:hiertest4Hyp}, we obtain significant results for $H_2$ and $H_4$, i.e. $\bar{R}^c=\{2,4\}$.

Note that one could start in Algorithm~\ref{alg_compromise_compatible} also with the repeated variant of Algorithm~\ref{AlgorithmusMauerBretz} and retest the hypotheses from $\bar{R}^r$. However, this strategy is more conservative and therefore not recommended. For our example, the first two hypotheses would be retested at level $\alpha$, and only $H_2$ could be rejected.

\subsection{Informativeness}
For our multiple test problem with the one sided null hypotheses $(H_j:\theta_j\leq 0)_{1\leq j\leq m}$ let us consider left-bounded simultaneous confidence intervals $\mathrm{SCI}=(L_1,\infty)\times\dots\times(L_m,\infty)$. For the graphical group sequential test introduced in Section~\ref{subsectionGraphicalGSD}, we are interested in SCIs that are ideally compatible and informative. \emph{Compatibility} means that  $H_j$ is rejected if and only if $L_j\geq 0$, $j\in\{1,\dots,m\}$. The understanding of \emph{informativeness} is based on the concept from \citealp{informativeSCIBrannathKlugeScharpenberg}. Roughly summarised, informativeness means that the bounds $L_j$  strictly increase with increasing evidence against the corresponding null hypothesis $H_j$, $1\leq j\leq m$ (part (b) in the following Definition~\ref{definitionInformativität}) and whenever possible, information about the unknown parameter $\theta_j\in\mathbb{R}$ is provided (part (a) in the following Definition~\ref{definitionInformativität}). This is formalised by:
\begin{definition}\label{definitionInformativität}
        We call a SCI with lower bounds $L=(L_1,\dots,L_m)$ \emph{informative} if
        \begin{itemize}
            \item[(a)] $L_j>-\infty$ whenever $H_j$ has no gatekeeper or for at least one gatekeeper $H_i$ for $H_j$ we have $L_i>0$;
            \item[(b)] $L_j(x^{\prime})>L_j(x)$, if the following holds for two data sets $x^{\prime}$ and $x$:
            \begin{itemize}
                \item[(i)] $x^{\prime}$ provides more \emph{evidence} against $H_j$ than $x$, and
                \item[(ii)]for all $i\neq j$ the evidence in $H_i$ is stronger in $x^{\prime}$ or the same in both data sets, and
                \item[(iii)]$L_j(x)>-\infty$.
            \end{itemize}
        \end{itemize}
    \end{definition}
The term of increasing and constant \emph{evidence} for Definition~\ref{definitionInformativität} in the context of graphical group sequential tests remains to be specified. For single-stage tests \citealp{informativeSCIBrannathKlugeScharpenberg} defined (strictly) increasing evidence via (strictly) decreasing local p-values $p_j(\mu_j)$ for all shifted null hypotheses $H_j^{\mu_j}:\theta_j\leq\mu_j$, $\mu_j\in\mathbb{R}$, $1\leq j\leq m$. For group sequential tests there is no clear understanding how to order two sample points $x,x^{\prime}$ if the data collection is stopped at different stages. This is why in the following we only compare samples for which the (current) last stage of data collection is the same. In addition, it is natural that the understanding of evidence is influenced by which local p-values of which stages may be used at the current stage for rejection of the hypotheses, i.e. whether repeated or sequential p-values may be used. This leads to the following definition.
   \begin{definition}\label{DefinitionEvidenzBretzAlgo}
        Let $\lambda\in\{r,s\}$. We say the data set $x^{\prime}$ provides \emph{more evidence} than $x$ against $H_j$, $j\in\{1,\dots,m\}$, (at stage $k$, $1\leq k\leq K$) if 
        \begin{itemize}
            \item[(a)] $k^*_j(x^{\prime})=k^*_j(k,x^{\prime})=k^*_j(k,x)=k^*_j(x)$, i.e., the current (at time point $k$) last stage of data collection for $H_j$ is the same, and 
            \item[(b)] $p_{j,k^*_j(x^{\prime})}^{\lambda}(x^{\prime},\mu_j)<p_{j,k^*_j(x)}^{\lambda}(x,\mu_j)$ for all $\mu_j\in\mathbb{R}$ where $p_{j,k^*_j(x^{\prime})}^{\lambda}(x^{\prime},\mu_j)<1$.
        \end{itemize}
        If (a) holds, and the inequality in (b) is not always strict, we say the \emph{evidence in $x$ and $x^{\prime}$ is the same}.
    \end{definition}
If $\lambda=r$ is chosen one can immediately see from the definition of the repeated p-value family from (\ref{DefinitionRepeatedSequentialPValues}) that strictly decreasing local p-values at stage $k_j^{*}$, i.e. $p_{j,k_j^*}(x^{\prime},\mu_j)<p_{j,k_j^*}(x,\mu_j)$ for all $\mu_j\in\mathbb{R}$ and strictly increasing and continuous nominal levels imply that (b) from Definition~\ref{DefinitionEvidenzBretzAlgo} is met. For the choice of $\lambda=s$, strictly decreasing local p-values at the current last stage of data collection or a fixed previous stage does not necessarily imply property (b) of Definition~\ref{DefinitionEvidenzBretzAlgo} in contrast to $\lambda=r$. The reason is that the sequential p-value is for all $\mu_j$, coined by the lowest repeated p-value (over the stages), which can differ for different $\mu_j,\mu_j^{\prime}\in\mathbb{R}.$

For multiple adjusted test strategies, we use the understanding of evidence given by the following definition. 
\begin{definition}\label{DefinitionEvidenzMixture}
        In case of following the efficient multiple adjustment-strategy, we say the data set $x^{\prime}$ provides \emph{more evidence} than $x$ against $H_j$, $j\in\{1,\dots,m\}$, if 
        \begin{itemize}
            \item[(a)] $\tau^*_j(x^{\prime})=\tau^*_j(x)$, i.e., the stage where the data collection for $H_j$ stops is the same, and 
            \item[(b)] $p_{j,\tau^*_j(x^{\prime})}^r(x^{\prime},\mu_j)< p_{j,\tau^*_j(x)}^r(x,\mu_j)$ for all $\mu_j\in\mathbb{R}$ where $p_{j,\tau^*_j(x^{\prime})}^r(x^{\prime},\mu_j)<1$, and
            \item[(c)] $p_{j,\tau^*_j(x^{\prime})}^s(x^{\prime},\mu_j)\leq p_{j,\tau^*_j(x)}^s(x,\mu_j)$ for all $\mu_j\in\mathbb{R}$.
        \end{itemize}
        If (a) and (c) hold, and the inequality in (b) is not always strict, we say the \emph{evidence in $x$ and $x^{\prime}$ is the same}.
    \end{definition}

\section{Compatible simultaneous confidence bounds for GSD using graphical approaches}\label{SubsectionCompatibleBounds}
In this section, we present simultaneous confidence bounds for the repeated (i.e. $\lambda=r$) and sequential (i.e. $\lambda=s$) graphical Algorithm~\ref{AlgorithmusMauerBretz} and our suggested efficient multiple adjustment Algorithm~\ref{alg_compromise_compatible} that are compatible. They are based on the bounds from \citealp{SB08} for closure tests. After that, we will introduce informative bounds. 

In the following, we use the definitions from Algorithm~\ref{AlgorithmusMauerBretz}: We consider $1\leq k\leq K$ stages, $\tau^*_j$ is defined by the stage where the data collection for $H_j$ stops, and $\tau^*=\max_{1\leq j\leq m}\{\tau^*_j\}\leq K$ gives the stage where the whole trial is terminated. For the current considered stage $1\leq k\leq K$ the last stage of data collection for hypothesis $H_j$ is $k_j^*=k_j^*(k)=\min\{\tau^*_j,k\}$. Note that $\tau^*$, $\tau^*_j$ and $k_j^*$, $1\leq j\leq m$, depend all on the sample and decisions. The hypotheses rejected by Algorithm~\ref{AlgorithmusMauerBretz} up to stage $k$ are defined by $\bar R_k^{\lambda}$. 

We start with compatible bounds from \citealp{SB08} for Algorithm \ref{AlgorithmusMauerBretz}. We take over the overline-notation from \citealp{SB08} again to distinguish them from our informative bounds that will be introduced in Section~\ref{SubsectionInformativeBounds}. Let $\lambda=\{r,s\}$. In addition to the inputs for Algorithm \ref{AlgorithmusMauerBretz}, significance levels $\alpha_j(\varnothing)$, $1\leq j\leq m$, fulfilling $\sum_{j=1}^m\alpha_j(\varnothing)=\alpha$ need to be chosen and fixed in advance. They come into play in the case that all hypotheses can be rejected. For each stage $1\leq k\leq K$,  simultaneous confidence bounds are given by
\begin{align}
    \bar{L}_{j,k}^{\lambda}:=
    \begin{cases}
			0 & \text{if $j\in \bar R_k^{\lambda}, \bar R_k^{\lambda}\neq I$}\\
            (p_{j,k_j^*}^{\lambda})^{-1}(\alpha_j(I\setminus \bar R_k^{\lambda})) & \text{if $j\notin \bar R_k^{\lambda}$}\\
            \max\{0,(p_{j,k_j^*}^{\lambda})^{-1}(\alpha_j(\varnothing))\}& \text{if $\bar R_k^{\lambda}=I$}
		 \end{cases},\quad 1\leq j\leq m,\quad k\leq \tau^*,\label{compatibleBounds}
\end{align}
and $\bar{L}_{j,k}^{\lambda}:=\bar{L}_{j,\tau^*}^{\lambda}$ for all $\tau^*<k\leq K$. The bounds  $\bar{L}_{j,k}^{\lambda}$ exactly reflect the test decisions made by Algorithm~\ref{AlgorithmusMauerBretz} up to stage $k$, $1\leq k\leq\tau^*$. In particular, the final confidence bounds $\bar{L}_{j,\tau^*}^{\lambda}$ are compatible with the graphical group sequential test, i.e. they meet $\bar{L}_{j,\tau^*}^{\lambda}\geq 0$ if and only if $H_j$ is rejected at the end of  Algorithm~\ref{AlgorithmusMauerBretz}. 
For each $1\leq k\leq K$, the coverage probability of the SCI is at least $1-\alpha$ because the bounds are more conservative than the compatible bounds corresponding to the graphical test using the p-values $p_{j,K}^s$, $j\in I$, for which the coverage probability is guaranteed by \citealp{SB08}. 

It is easy to see that the sequential SCI bounds fulfil $\bar L_{j,k}^s\leq \bar L_{j,k+1}^s$ for all $j\in I$ and $1\leq k <K$. This follows from the definition of the sequential p-values, which yields increasing sequential p-values along the stages. 

We now give compatible bounds for the efficient multiple adjustment Algorithm \ref{alg_compromise_compatible}. Of course, one could perform Algorithm \ref{alg_compromise_compatible} after each stage $k$ of Algorithm \ref{AlgorithmusMauerBretz} for $\lambda=s$. Then one obtains compatible bounds at stage $k$ by replacing in the following $\tau^*$ by $k$, $\tau_j^*$ by $k_j^*(k)$, $\bar R^c$ by $\bar R^c_k$ and $\bar R^s$ by $\bar R^s_k$.

Simultaneous confidence bounds that are compatible with Algorithm \ref{alg_compromise_compatible} are given by

\begin{align}
    \bar{L}_{j,\tau^*}^{c}:=
    \begin{cases}
			0 & \text{if $j\in \bar R^{c}, \bar R^{s}\neq I$}\\
            (p_{j,\tau_j^*}^{r})^{-1}(\alpha_j(\{j\}\cup I\setminus \bar R^{s})) & \text{if $j\notin \bar R^{c}$}\\
            \max\{0,(p_{j,\tau_j^*}^{r})^{-1}(\alpha_j(\varnothing))\}& \text{if $j\in\bar R^c,  \bar R^{s}=I$}
		 \end{cases}~, \quad 1\leq j\leq m.\label{mixtureCompatibleBounds}
\end{align}
Similar to the control of the FWER of Algorithm \ref{alg_compromise_compatible}, the coverage probability of the SCI defined by (\ref{mixtureCompatibleBounds}) is at leat $1-\alpha$ because of $\bar L_{j,\tau^*}^c\leq\bar L_{j,\tau^*}^s$, for all $1\leq j\leq m$.

For the compatible bounds (\ref{compatibleBounds}) based on \citealp{bretz2009graphical}, it is well known and obvious that they do not meet the informativeness definition \ref{definitionInformativität}  because the bounds remain stuck at the border of the null hypothesis as long as not all null hypotheses have been rejected by the corresponding compatible test. This applies in a similar way to the bounds (\ref{mixtureCompatibleBounds}). However, the difference here is that the boundaries remain stuck at the border of the null hypothesis, as long as the sequential variant of Algorithm \ref{AlgorithmusMauerBretz} does not reject all hypotheses. This is not the corresponding compatible test.

\section{Informative simultaneous confidence bounds}\label{mainChapterInformativeSCIs}
In this section, we first give a short overview of the method to construct informative simultaneous confidence bounds for graphical single-stage test procedures from \citealp{informativeSCIBrannathKlugeScharpenberg}. After that, the algorithm will be extended to yield iterative algorithms for calculating these bounds for the graphical group sequential test procedures introduced in Section~\ref{subsectionGraphicalGSD}.
\subsection{Characteristics of the method}\label{sectionCharacteristics}
The method from \citealp{informativeSCIBrannathKlugeScharpenberg} gives informative simultaneous confidence bounds for all graphical single-stage tests from \citealp{bretz2009graphical}. The key idea is to modify the local levels of the graph (i.e., the levels of the underlying closure test) in a way that roughly speaking, some significance level is retained after rejecting a hypothesis $H_j=H_j^0$, $j=1,\dots,m$, in order to test the shifted hypotheses $H_j^{\mu_j}$, $\mu_j> 0$, and increase the confidence bound. That means not the whole level is passed to the other hypotheses after rejection. The strategy is formalized by defining for each $\mu=(\mu_1,\dots,\mu_m)\in\mathbb{R}^m$, a weighted Bonferroni Test (i.e. levels $\alpha_j^{\mu}$, $j=1,\dots, m$, fulfilling $\sum_{j=1}^m\alpha_j^{\mu}=\alpha$) for the intersection of shifted hypotheses $H^{\mu}:=H_1^{\mu_1}\cap\dots\cap H_m^{\mu_m}$. This hypothesis is rejected if and only if $p_j(\mu_j)\leq\alpha_j^{\mu}$ for at least one $j$ holds true. Here, $p_j(\mu_j)$ describes the (local) p-value of the first and at the same time final stage. An m-dimensional confidence set is then defined via 
\begin{align}
    C:=C(x):=\{\mu\in\mathbb{R}^m: H^{\mu}\text{~is not rejected}\}=\{\mu\in\mathbb{R}^m:\min_{\substack{j=1,\dots,m,\\ \alpha_j^{\mu}>0}}p_j(\mu_j,x)/\alpha_j^{\mu}>1 \}\label{confidenceRegionInformativeClassical}~.
\end{align}
It has a coverage probability of at least $1-\alpha$ because the local Bonferroni tests maintain the significance level $\alpha$. The informative bounds are then defined as the component-wise projection of $C$, i.e., the smallest (left-sided limited) rectangle $L$ such that $L=(L_1,\dots, L_m)\supseteq C$ holds true. Because $L$ always contains $C$, it has a coverage probability of at least $1-\alpha$.     

The local significance levels are defined using dual graphs of the original graphical test. This is explained in detail in \citealp{informativeSCIBrannathKlugeScharpenberg}. We will give a short summary using our example of a hierarchical test with four hypotheses: The \emph{first step} is the creation of a dual graph.  For the dual graphs, an information weight $q\in (0,1)$ must be chosen and specified in advance. Roughly speaking, the information weight describes how much level of a rejected hypothesis is kept for testing corresponding shifted hypotheses and how much is passed to the other hypotheses. Let now be $\mu=(\mu_1,\dots,\mu_m)\in\mathbb{R}^m$. In our example, we have $m=4,$ and we assume that $\mu_j>0$ for $j=1,2,4$ and $\mu_3\leq 0$. The starting point of the dual graph $G^{\mu}$ is the underlying graph $G$, see Figure~\ref{PictureHTest}. Now, for all $1\leq j\leq m$ with $\mu_j\leq 0$ all path starting at $H_j=H_j^0$ are deleted and $H_j$ is replaced by the shifted hypotheses $H_j=H_j^{\mu_j}:\theta_j\leq\mu_j$. In our example, this applies to the third component, see Figure~\ref{PictureHTestDualGraph}. For all $j$ with $\mu_j>0$, the node $H_j$ is kept, and an additional node for $H_j^{\mu_j}$ is added with initial local level $0$. Additionally, for all those $j$ an arrow is added that goes from $H_j$ to $H_j^{\mu_j}$ with transition weight $q^{\mu_j}$. The transition weights $g_{ji}$ of all other arrows going from $H_j$ to a $H_i=H_i^0$ (if $\mu_i>0$) or to a $H_i^{\mu_i}$ (if $\mu_i\leq 0$), $i\neq j$, are changed to $g_{ji}(1-q^{\mu_j})$. In our example, this is done for the hypotheses $H_1$ and $H_2$ because $\mu_1>0$ and $\mu_2>0$. For the fourth hypothesis, it was also assumed that $\mu_4>0$. Because in the original graph, there is no outgoing arrow, which is called in \citealp{informativeSCIBrannathKlugeScharpenberg} a ``non-complete graph'', the transition weight of the arrow from $H_4$ to $H_4^{\mu_4}$ can be set to $1$. The resulting dual graph $G^{\mu}$ is a graphical test in the sense of \citealp{bretz2009graphical} because when adding the shifted hypotheses, it is ensured that the sum of the outgoing arrows only adds up to 1. 

\begin{figure}[h]
\begin{center}
\begin{tikzpicture} [node distance = 4cm, on grid, auto]

\node[label={$\alpha$}, inner sep = 6pt] (H1) [state] {$H_1$};
\node[label={$0$}, inner sep = 6pt] (H2) [state, right = of H1] {$H_2$};
\node[label={$0$}] (H3) [state, right = of H2] {$H_3^{\mu_3}$};
\node[label={$0$}, inner sep = 6pt] (H4) [state, right = of H3] {$H_4$};
\node[label=right:{$0$}] (H1mod) at (0,-2) [state] {$H_1^{\mu_1}$};
\node[label=right:{$0$}] (H2mod) at (4,-2) [state] {$H_2^{\mu_2}$};
\node[label=right:{$0$}] (H4mod) at (12,-2) [state] {$H_4^{\mu_4}$};

\path [-stealth, thick]
	(H1) edge  node {$(1-q^{\mu_1})$}    (H2)
	(H2) edge  node {$(1-q^{\mu_2})$}    (H3)
	(H1) edge  node {$q^{\mu_1}$} (H1mod)
	(H2) edge  node {$q^{\mu_2}$} (H2mod)
	(H4) edge  node {$1$} (H4mod);
	
\end{tikzpicture}
\end{center}\caption{Dual graph for the hierarchical test with 4 hypotheses for a $\mu=(\mu_1,\mu_2,\mu_3,\mu_4)\in\mathbb{R}_{-\infty}^4$, with $\mu_1>0$, $\mu_2>0$, $\mu_3\leq 0$ and $\mu_4>0$.}\label{PictureHTestDualGraph}
\end{figure}

When looking at the dual graph from Figure~\ref{PictureHTestDualGraph} and the underlying original one from Figure~\ref{PictureHTest}, the role of the information weight becomes clear. The smaller $q$ is, the closer the dual graph comes to the original graph, and the larger $q$ is, the more weight is given to the shifted hypothesis after rejection. Note that all shifted hypotheses $H_j^{\mu_j}$ are contained in the dual graph. Thus, the graph can be understood as a test for $H^{\mu}=\cap_{j=1}^m H_j^{\mu_j}$. In general, the test of an intersection is included in a graph in the following way: All hypotheses except the ones that contribute to the intersection hypotheses are rejected, and the graphical test update algorithm is performed. The test for the intersection hypotheses is then given by the Bonferroni test, based on the actual local levels. This is exactly what is done in the \emph{second step} after creating the dual graph: the graphical test update algorithm by \citealp{bretz2009graphical} is performed, assuming that all $H_j=H_j^{0}$ (in the case of the hierarchical test $H_1$, $H_2$, and $H_4$) that are still contained in the graph can be rejected. This gives the local significance levels $\alpha_j^{\mu}$ used in (\ref{confidenceRegionInformativeClassical}). We want to summarize the most important properties of them mentioned and proven in \citealp{informativeSCIBrannathKlugeScharpenberg}.
\begin{proposition}\label{propositionWithImportantProperties}
For all $\mu=(\mu_1,\dots,\mu_m)\in\mathbb{R}^m$ it holds 
\begin{itemize}
    \item[(i)] the local level always satisfy $\sum_{j=1}^m \alpha_j^{\mu}=\alpha$;
    \item[(ii)] the local levels are of the form $\alpha_j^{\mu}=q^{\mu_j\vee 0}\nu_j(\mu)\alpha$ where $\nu_j:\mathbb{R}^m\rightarrow\mathbb{R}_{\geq 0}$ is a function that is continuous and non-decreasing in each component and $q^{(\mu_j\vee 0)}\nu_j(\mu)\leq 1$;
    \item[(iii)] for all $\mu\in\mathbb{R}^m$ such that $\mu_j\leq 0$ the local level fulfils $\alpha_j^{\mu}\geq\alpha_j$, $1\leq j\leq m$.
\end{itemize}
\end{proposition}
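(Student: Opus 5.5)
The plan is to work directly with the two-step construction of the levels $\alpha_j^{\mu}$: first build the dual graph $G^{\mu}$, then run the update algorithm of \citealp{bretz2009graphical} on $G^{\mu}$ under the fiction that every unshifted node $H_j=H_j^0$ with $\mu_j>0$ is rejected. The central fact to exploit is that $G^{\mu}$ is itself a valid graph in the sense of \citealp{bretz2009graphical}. Its initial levels are those of $G$ and sum to $\alpha$. Its outgoing weights from $H_j$ are $g_{ji}(1-q^{\mu_j})$ together with $q^{\mu_j}$, so they sum to at most $1$; in the non-complete case the weight to $H_j^{\mu_j}$ is raised so that the sum becomes exactly $1$.

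For (i), I would use the standard invariant of the update rule: after removing a node $j$, the total level over the remaining nodes changes by $\alpha_j(I)(\sum_{\ell}g_{j\ell}-1)$. No level is lost as long as the outgoing weights of every rejected node sum to one on the remaining nodes. The shifted nodes $H_j^{\mu_j}$ are never rejected, so they act as sinks that absorb the level. I would argue that every node of $G$ that is not complete receives an arrow of total weight $1$ in $G^{\mu}$, while complete nodes already distribute everything. The update formula for $g_{\ell i}$ preserves row sums equal to one, which I would check by the usual computation. Hence after all unshifted nodes are removed, the levels on $\{H_j^{\mu_j}\}$ sum to $\alpha$. One delicate point remains for nodes $H_j$ with $\mu_j\le 0$: they are replaced by $H_j^{\mu_j}$ and their outgoing paths are deleted, so they keep their level, which is consistent with (i).

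For (ii), I would make the level of $H_j^{\mu_j}$ explicit. If $\mu_j\le0$, the node keeps everything routed into it, so $\alpha_j^{\mu}=\nu_j(\mu)\alpha$ with $q^{0}=1$. If $\mu_j>0$, the only inflow into $H_j^{\mu_j}$ is the single arrow from $H_j$ with weight $q^{\mu_j}$. At the moment $H_j$ is removed, its accumulated level $\nu_j(\mu)\alpha$ is passed on with this factor, and later updates do not create new paths into $H_j^{\mu_j}$ other than through $H_j$. This gives $\alpha_j^\mu=q^{\mu_j}\nu_j(\mu)\alpha$. Here $\nu_j(\mu)\alpha$ is the level that reaches node $H_j$, and it is a polynomial in the factors $q^{\mu_i\vee0}$ and $(1-q^{\mu_i\vee 0})$, $i\ne j$, which gives continuity. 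The bound $q^{\mu_j\vee0}\nu_j\le1$ follows from (i). Monotonicity is the main obstacle. Increasing $\mu_i$ for $i\ne j$ lowers $q^{\mu_i}$ and raises $1-q^{\mu_i}$, so more level flows onward from $H_i$ and less is kept at $H_i^{\mu_i}$; it must be shown that this can only increase the level arriving at $H_j$. My first attempt would use the closure representation: by consonance, $\nu_j(\mu)\alpha$ equals the weight of $H_j$ in the intersection test over $\{j\}$ together with all shifted nodes. Then I would show that lowering $q^{\mu_i}$ is equivalent to shifting weight from the sink $H_i^{\mu_i}$ into the continuation paths, by a path-sum expansion in which every term reaching $H_j$ carries a factor $(1-q^{\mu_i})$ and never a factor $q^{\mu_i}$.

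There are two further cases. The first is the transition of $\mu_i$ across $0$: at $\mu_i=0$ one has $q^0=1$, so the node $H_i$ passes nothing, which matches the deletion of its outgoing paths, and continuity holds there. The second is $\mu_j$ itself: $\nu_j$ does not depend on $\mu_j$ because the level reaching $H_j$ does not use the arrows out of $H_j$.

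For (iii), take $\mu_j\le 0$. The node $H_j^{\mu_j}$ starts with the initial level $\alpha_j$, and levels in the update algorithm only ever increase for nodes that are not rejected. Since this node is never rejected, $\alpha_j^\mu\ge\alpha_j$.
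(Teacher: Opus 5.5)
\textbf{Verdict.} Your arguments for (i) and (iii) are sound, but there is a genuine gap in (ii): your $\nu_j$ is the wrong quantity whenever $H_j$ lies on a cycle of $G$. Your (iii) is exactly the reasoning the paper uses in its appendix: for $\mu_j\le 0$ nothing leaves $H_j^{\mu_j}$, and levels of non-rejected nodes only grow. The paper itself only cites earlier work for (i) and (ii). For (i) you should add one detail. The degenerate branch $g_{\ell j}g_{j\ell}=1$ of the update never occurs, because every unshifted node keeps weight at least $q^{\mu_\ell}>0$ towards its own sink.

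\textbf{Where (ii) fails.} You take $\nu_j(\mu)\alpha$ to be the level present at $H_j$ when it is removed. By your closure representation this is the level of $H_j$ once all other unshifted nodes are rejected. You then multiply it once by $q^{\mu_j}$ and assert that nothing reaches $H_j^{\mu_j}$ afterwards. Take the Holm graph with $m=2$, $g_{12}=g_{21}=1$, $\alpha_1=\alpha_2=\alpha/2$, $\mu_1,\mu_2>0$ and $q_i:=q^{\mu_i}$.
\begin{itemize}
\item Rejecting $H_2$ first gives $H_1$ the level $\alpha(2-q_2)/2$. The update also inflates the weight $H_1\to H_1^{\mu_1}$ to $q_1/(q_1+q_2-q_1q_2)$.
\item Rejecting $H_1$ first instead creates an edge $H_2\to H_1^{\mu_1}$ of weight $(1-q_2)q_1/(q_1+q_2-q_1q_2)$. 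So level keeps flowing into $H_1^{\mu_1}$ after $H_1$ is gone.
\end{itemize}
Either way
\[
\alpha_1^{\mu}=q_1\cdot\frac{(2-q_2)\,\alpha}{2\,(q_1+q_2-q_1q_2)},
\]
whereas your formula gives $q_1(2-q_2)\alpha/2$. For $q_1=q_2=1/2$ that is $3\alpha/8$ instead of $\alpha/2$, and your two levels would sum to $3\alpha/4$, contradicting your own (i). Hence the true $\nu_1=(2-q_2)/\bigl(2(q_1+q_2-q_1q_2)\bigr)$ depends on $\mu_1$. Your claim that $\nu_j$ is independent of $\mu_j$ is therefore false, and monotonicity of $\nu_j$ in its own coordinate, which (ii) requires, is left unproved.

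The same issue arises for non-complete nodes. In the paper's hierarchical example the sink weight of $H_4$ is $1$, not $q^{\mu_4}$, so $\nu_4$ must contain the factor $q^{-(\mu_4\vee 0)}$. Also, your ``polynomial'' claim for continuity is wrong once cycles are present.

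\textbf{How to repair it.} Read the update algorithm on $G^\mu$ as absorption of a Markov chain. Its transient states are the unshifted nodes and its absorbing states are the shifted ones. The rule $g_{\ell i}\mapsto(g_{\ell i}+g_{\ell j}g_{ji})/(1-g_{\ell j}g_{j\ell})$ is elimination of state $j$ followed by removal of the self-loop at $\ell$, so it preserves absorption probabilities. Then, for complete nodes, $\alpha_j^\mu=q^{\mu_j\vee 0}\nu_j(\mu)\alpha$ holds with the following choice:
\begin{itemize}
\item $\nu_j(\mu)\alpha$ is the expected total mass passing through $H_j$;
\item equivalently, it is the sum over all walks through unshifted nodes ending in $H_j$, including walks that visited $H_j$ before.
\end{itemize}
Every edge leaving an unshifted $H_i$ carries the factor $1-q^{\mu_i\vee 0}$, and no such walk uses a sink edge. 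So $\nu_j$ is a series with nonnegative coefficients in the variables $1-q^{\mu_i\vee 0}$, $i=1,\dots,m$, including $i=j$. This is your path-sum idea applied to the right quantity, and it gives monotonicity in all coordinates at once.

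The series converges because each row of the transient block sums to at most $1-q^{\mu_i}<1$. Its value is a rational function with non-vanishing denominator, which gives continuity. Under your completion convention, a non-complete node with out-weight $c_j<1$ has sink weight $1-(1-q^{\mu_j})c_j$. This adds the factor $c_j+(1-c_j)q^{-\mu_j}$, which is again non-decreasing in $\mu_j$. The bound $q^{\mu_j\vee 0}\nu_j\le 1$ then follows from (i), as you say.
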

These characteristics played an important role in \citealp{informativeSCIBrannathKlugeScharpenberg} for the construction of an algorithm that converges (from below) against the (projected) bounds $L$. It was also shown in \citealp{BuchWassmerBrannath2025} that the bounds $L$ meet the informativeness Definition \ref{definitionInformativität}. In Section~\ref{SubsectionInformativeBounds}, we will extend this algorithm to graphical group sequential test procedures.

\subsection{Informative simultaneous confidence bounds for GSD using graphical approaches} \label{SubsectionInformativeBounds}
In this section, we will first extend the algorithm from \citealp{informativeSCIBrannathKlugeScharpenberg} to calculate informative bounds at each stage $1\leq k\leq K$ of the trial. This will be called \emph{primary algorithm}. After that, we will formulate algorithms that yield informative simultaneous confidence bounds for the group sequential test procedures from Section~\ref{subsectionGraphicalGSD} by repeatedly calling the primary algorithm.

As in Section~\ref{sectionCharacteristics}, we have a fixed underlying graphical test with $K\in\mathbb{N}$ stages for $m$ hypotheses and a fixed information weight $q\in (0,1)$. We will work again with the local significance levels $\alpha_j^{\mu}$, $1\leq j\leq m$, $\mu\in\mathbb{R}^m$, uniquely defined by the dual graphs. For $\lambda\in\{r,s\}$ and $1\leq k\leq K$ we define the confidence set
\begin{align}
    C^{\lambda}_{k}:=C_k^{\lambda}(x):=\{\mu\in\rk: \min_{\substack{j=1,\dots,m,\\ \alpha_j^{\mu}>0}} p^{\lambda}_{j,k^*_j(k)}(\mu_j,x)/\alpha_j^{\mu}> 1\}~.\label{confidenceSetEachStage}
\end{align}
As in the previous sections, it holds $k_j^{*}(k)=\min\{\tau_j^{*},k\}$ where $\tau_j^*$, $1\leq\tau_j^*\leq K$, is the final stage of data collection for the hypothesis $H_j$, $1\leq j\leq m$. The stage $k_j^{*}(k)$ depends on the data and decisions. The confidence set is similar to the one from (\ref{confidenceRegionInformativeClassical}), whereby the p-values are now given by group sequential p-values. Note that it always holds for all $\mu_j\in\mathbb{R}$ and for all observed data $x$ and decisions that $p_{j,k_j^*(k)}^{\lambda}(\mu_j,x)\leq p_{j,K}^{s}(\mu_j,x)$. This guarantees that we have valid p-values. In particular, the confidence set (\ref{confidenceSetEachStage}) has a coverage probability of at least $1-\alpha$. We denote the corresponding projected (left-sided limited) SCI in the following by $L_k^{\lambda}=(L_{1,k}^{\lambda},\dots,L_{m,k}^{\lambda})$.

With the Primary Algorithm \ref{alg_1} a lower approximation of $L_k^{\lambda}=(L_{1,k}^{\lambda},\dots,L_{m,k}^{\lambda})$ can be calculated. The algorithm also includes the calculation of an upper estimate, which was not included in previous work, such as \citealp{informativeSCIBrannathKlugeScharpenberg}.

\begin{algorithm}[h]
\caption{Primary algorithm}\label{alg_1}
\KwInput{Graphical group sequential test with $K$ stages for $m$ hypotheses, stage $k$, $1\leq k\leq K$, of the trial, for each hypothesis last performed stage $k_j^*=k_j^*(k)$, $1\leq k_j^*\leq K$, and p-values $p_{j,k_j^*}^{\lambda}$, $1\leq j\leq m$, for a fixed $\lambda\in\{r,s\}$, information weight $q$, $\varepsilon$, significance level $\alpha$ and strictly decreasing, strictly positive sequence $(\delta^{(\ell)})_{\ell\in\mathbb{N}_0}$ that converges to $0$ and fulfills $\alpha+\delta^{(0)}<1$.}
Set $\ell=0$\;
Find lower starting vector $\mu^{(0)}\in\rk$ such that
\begin{align}
    p_{j,k_j^*}^{\lambda}(\mu_j^{(0)})\leq q^{(\mu_j^{(0)}\vee 0)}\nu_j(\mu^{(0)})\alpha \quad \text{for all } j=1,\dots,m.\label{eq_astartLower}
\end{align}

Find upper starting vector $\rho^{(0)}\in\rk$ such that
\begin{align}
    p_{j,k_j^*}^{\lambda}(\rho_j^{(0)})\geq q^{(\rho_j^{(0)}\vee 0)} \nu_j(\rho^{(0)})(\alpha+\delta^{(0)})\quad\text{for all } j=1,\dots,m.\label{eq_astartUpper}
\end{align}

\While{$\|\mu^{(\ell)} - \rho^{(\ell)}\|_2 \geq \varepsilon$}{
Calculate $\mu^{(\ell+1)}$ such that
\begin{align}
    p_{j,k_j^*}^{\lambda}(\mu_j^{(\ell+1)})=q^{(\mu_j^{(\ell+1)}\vee 0)}\nu_j(\mu^{(\ell)})\alpha \quad\text{for all } j=1,\dots,m. \label{eq_stepLower}
\end{align}

Calculate $\rho^{(\ell+1)}$ such that
\begin{align}
  p_{j,k_j^*}^{\lambda}(\rho_{j}^{(\ell+1)})=q^{(\rho_{j}^{(\ell+1)}\vee 0)}\nu_j(\rho^{(\ell)})(\alpha+\delta^{(\ell)}) \quad\text{for all } j=1,\dots,m.\label{eq_stepUpper}
\end{align}
Update $\ell\rightarrow\ell +1$\;
}
\Return{$\mu^{(s)}$, $\rho^{(s)}$ and $\|\mu^{(s)} - \rho^{(s)}\|_2$ where $s=\ell-1$ is the last index where the approximations where calculated.}
\end{algorithm}
The following Theorem summarizes the most important properties of Algorithm \ref{alg_1}.

\begin{theorem}\label{theoremMainResults}
The following properties are satisfied by Algorithm \ref{alg_1}.
\begin{itemize}
    \item[(a)] For each valid starting vector $\mu^{(0)}\in\mathbb{R}^m$ such that (\ref{eq_astartLower}) is met, the sequence $(\mu^{(\ell)})_{\ell\in\mathbb{N}_0}$ is component-wise increasing and converges from below against $L_k^{\lambda}$.
    \item[(b)] For each valid starting vector $\rho^{(0)}\in\mathbb{R}^m$ such that (\ref{eq_astartUpper}) is met, the sequence $(\rho^{(\ell)})_{\ell\in\mathbb{N}_0}$ is component-wise decreasing and converges from above against $L_k^{\lambda}$.
    \item[(c)] The SCI bounds $L_k^{\lambda}$ are \emph{informative}. 
\end{itemize}
\end{theorem}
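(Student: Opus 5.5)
The strategy is to reduce everything to the single-stage argument of \citealp{informativeSCIBrannathKlugeScharpenberg}. That argument used only two ingredients: Proposition~\ref{propositionWithImportantProperties} for the levels, and monotonicity and continuity of $\mu_j\mapsto p_j(\mu_j)$. By Section~\ref{subsectionGraphicalGSDSetUp}, the maps $\mu_j\mapsto p^{\lambda}_{j,k_j^*}(\mu_j)$ have the same properties for $\lambda\in\{r,s\}$: they are continuous, strictly increasing below the point where they reach $1$, and have limits $0$ and $1$. A first step is to record the projection characterisation. Write $h_j(\mu):=p^{\lambda}_{j,k_j^*}(\mu_j)/(q^{\mu_j\vee 0}\nu_j(\mu)\alpha)$. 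Then $L_{j,k}^{\lambda}=\inf\{\mu_j:\mu\in C_k^{\lambda}\}$, and $L_k^{\lambda}$ is the least vector with $h_j(L_k^\lambda)\ge 1$ for all $j$. More precisely, $L_k^\lambda$ should be the fixed point of the map $T$ defined by $(T\mu)_j=$ the unique solution $t$ of $p^{\lambda}_{j,k_j^*}(t)=q^{t\vee 0}\nu_j(\mu)\alpha$. This solution is unique because the left side is strictly increasing and the right side is non-increasing in $t$.

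For (a), I would first note that $T$ is monotone. Since $\nu_j$ is non-decreasing in each component, $\mu\le\mu'$ gives $\nu_j(\mu)\le\nu_j(\mu')$, hence a larger right-hand side, hence $T\mu\le T\mu'$. Condition (\ref{eq_astartLower}) says that $p(\mu^{(0)}_j)\le q^{\mu_j^{(0)}\vee0}\nu_j(\mu^{(0)})\alpha$, so $\mu^{(0)}\le T\mu^{(0)}=\mu^{(1)}$. Induction then gives an increasing sequence. I would bound it above by any point of the complement of $C_k^\lambda$ in each coordinate direction, i.e.\ show $\mu^{(\ell)}\le L_k^\lambda$ by induction. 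This uses that $\mu\le L$ implies $T\mu\le TL=L$, where $TL=L$ holds because $L$ lies on the boundary of $C_k^{\lambda}$. By continuity of $\nu_j$ and of the p-values, the limit is a fixed point of $T$. The remaining task is to identify this fixed point with $L_k^\lambda$. I would argue that any fixed point $\mu^*\le L$ has $h_j(\mu^*)=1$ for all $j$. If $\mu^*_j<L_j$ for some $j$, then the points $(\mu^*_j,\mu_{-j})$ with $\mu_{-j}$ large lie outside $C$ by monotonicity of $\nu_j$. This contradicts the fact that $L_j$ is the infimum over points of $C$, after a careful argument that $C$ is an upper set in the relevant sense, as in the single-stage paper.

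For (b), the upper sequence is handled symmetrically, using $T_\delta$ with $\alpha$ replaced by $\alpha+\delta^{(\ell)}$. Condition (\ref{eq_astartUpper}) gives $\rho^{(1)}\le\rho^{(0)}$. Because $\delta^{(\ell)}$ decreases, the maps satisfy $T_{\delta^{(\ell+1)}}\le T_{\delta^{(\ell)}}$, and induction yields $\rho^{(\ell+1)}\le\rho^{(\ell)}$. Each $\rho^{(\ell)}\ge L$, since $T_\delta L\ge TL=L$ and monotonicity applies. The sequence therefore converges to a point $\rho^*\ge L$ that, by continuity and $\delta^{(\ell)}\to0$, is a fixed point of $T$. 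The main obstacle is uniqueness of the fixed point, i.e.\ excluding $\rho^*>L$. My plan there is to use the strict decrease of $q^{t}$ together with the condition $\alpha+\delta^{(0)}<1$, which keeps the solutions in the range where $p$ is strictly increasing. The aim is to show that the set $\{\mu: h(\mu)\ge1\}$ has a componentwise minimum, and that a fixed point strictly above it would contradict the bound $q^{\mu_j\vee0}\nu_j(\mu)\le1$ combined with $\sum_j\alpha_j^\mu=\alpha$. If this fails in general, I would fall back on showing that $\rho^*\in\overline{C^c}$ and arguing coordinatewise, as in (a).

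For (c), informativeness, I would reuse the argument from \citealp{BuchWassmerBrannath2025}, feeding in Definitions~\ref{DefinitionEvidenzBretzAlgo} and \ref{DefinitionEvidenzMixture} of evidence. In both definitions, more evidence means a strictly smaller $p^\lambda_{j,k_j^*}(\cdot)$ with the same $k_j^*$, and weakly smaller p-values for $i\ne j$. Consequently $T_{x'}\ge T_x$ componentwise, with strict inequality in coordinate $j$ whenever $L_j(x)>-\infty$. Combined with the fixed-point characterisation, this gives $L_j(x')>L_j(x)$. Part (a) of Definition~\ref{definitionInformativität} follows from $\nu_j>0$ exactly when $H_j$ has no gatekeeper or some gatekeeper has $\mu_i>0$; this is property (ii) of Proposition~\ref{propositionWithImportantProperties}.
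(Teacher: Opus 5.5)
Your induction for the decrease of $(\rho^{(\ell)})$ and your passage to the limit (continuity plus $\delta^{(\ell)}\to 0$) agree with the paper. The gap is the step ``each $\rho^{(\ell)}\ge L_k^{\lambda}$'', which is not established.

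\textbf{The induction has no base case.} Your step $\rho^{(\ell)}\ge L\Rightarrow \rho^{(\ell+1)}=T_{\delta^{(\ell)}}\rho^{(\ell)}\ge T_{\delta^{(\ell)}}L\ge TL=L$ needs $\rho^{(0)}\ge L_k^{\lambda}$ to start. But (\ref{eq_astartUpper}) only gives $T_{\delta^{(0)}}\rho^{(0)}\le\rho^{(0)}$, and nothing in your monotonicity argument converts this into $\rho^{(0)}\ge L_k^{\lambda}$.

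\textbf{It also presupposes $TL=L$.} That is true, since it is what the paper imports from the earlier work. Your reason, that $L$ ``lies on the boundary of $C_k^{\lambda}$'', does not prove it. $L$ is a vector of coordinatewise infima, not a priori a point of $\partial C_k^{\lambda}$. Even at a boundary point, an active constraint would give one equation, not all $m$.

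\textbf{The missing idea} is the paper's, and it is exactly where $\delta>0$ enters. Because $\rho^{(\ell)}\le\rho^{(\ell-1)}$ and $\nu_j$ is non-decreasing, every iterate satisfies $p^{\lambda}_{j,k_j^*}(\rho^{(\ell)}_j)\ge q^{(\rho^{(\ell)}_j\vee 0)}\nu_j(\rho^{(\ell)})(\alpha+\delta)>\alpha_j^{\rho^{(\ell)}}$ whenever $\alpha_j^{\rho^{(\ell)}}>0$. Here $\delta=\delta^{(\ell-1)}$, or $\delta^{(0)}$ for $\ell=0$. So $\rho^{(\ell)}\in C_k^{\lambda}$, and hence $\rho^{(\ell)}\ge L_k^{\lambda}$ directly from the definition of the projection. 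No fixed-point property of $L$ is needed. You give $\delta$ a different role (keeping solutions where $p$ is strictly increasing), but that only uses $\alpha+\delta^{(0)}<1$.

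\textbf{Your ``main obstacle'' then disappears.} Suppose the limit $\rho^*\ge L$ and both solve $p^{\lambda}_{j,k_j^*}(\mu_j)=\alpha_j^{\mu}$ for all $j$. Summing and using $\sum_j\alpha_j^{\mu}=\alpha$ gives $\sum_j p^{\lambda}_{j,k_j^*}(\rho^*_j)=\sum_j p^{\lambda}_{j,k_j^*}(L_j)$, with termwise $\ge$. Hence equality holds termwise, and strict monotonicity on $[0,\alpha]$ gives $\rho^*=L$. The bound $q^{(\mu_j\vee 0)}\nu_j\le 1$ plays no role. The paper itself simply cites the earlier work for the fact that this system characterizes $L_k^{\lambda}$ uniquely.

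\textbf{Your sketch of (a) has parallel weaknesses.} First, $\mu^{(0)}\le L$ is never derived from (\ref{eq_astartLower}). Second, your identification argument runs the wrong way. If $\mu^*_j<L_j$, points with $j$-th coordinate $\mu^*_j$ lie outside $C_k^{\lambda}$ by the very definition of $L_j$, so nothing is contradicted. What is needed is points of $C_k^{\lambda}$ approaching $\mu^*$. Third, $C_k^{\lambda}$ is not an upper set in general. In the hierarchical test with $\mu_2\le 0$, raising $\mu_1$ from $\le 0$ to $>0$ switches on the constraint $p^{\lambda}_{2,k_2^*}(\mu_2)>\alpha(1-q^{\mu_1})$, which is absent for $\mu_1\le 0$. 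Since the paper defers (a) and (c) to earlier work, this matters less, but as written your argument for (a) would not close.
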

Properties (a) and (c) are proven in \citealp{informativeSCIBrannathKlugeScharpenberg}. For a proof of property (b), we refer to the Appendix. Theorem \ref{theoremMainResults} states that the limit vector of the algorithm is unique regardless of the starting vector.

In the following, we would like to take a closer look at the input vectors and returned vectors of Algorithm \ref{alg_1}. 
\begin{remark}
Valid starting vectors $\mu^{(0)}$ and $\rho^{(0)}$ for calculating a lower and upper approximation of $L_k^{\lambda}$ are given by 
\begin{align}
    \mu_j^{(0)}=\min\{0, (p_{j,k_j^*}^{\lambda})^{-1}(\alpha_j)\} \quad \text{and}\quad \rho_j^{(0)}=(p_{j,k_j^*}^{\lambda})^{-1}(\alpha +\delta^{(0)}), \quad 1\leq j\leq m~.\label{startVectors}
\end{align}
The vector $\mu^{(0)}$ meets equation (\ref{eq_astartLower}) because $\mu_j^{(0)}\leq 0$ implies $\alpha_j\leq\alpha_j^{\mu}=q^{(\mu_j^{(0)}\vee 0)}\nu_j(\mu^{(0)})\alpha$. The vector $\rho^{(0)}$ meets equation (\ref{eq_astartUpper}) because $1\geq q^{(\rho_j^{(0)}\vee 0)}\nu_j(\rho^{(0)})$.
\end{remark}

For the returned vectors it holds $\mu^{(s)} \leq L_k^{\lambda}\leq \rho^{(s)}$ pointwise. It is to be expected that the lower approximation is of greater practical interest because the corresponding SCI has a coverage probability of at least $1-\alpha$ because it contains $(L_{1,k}^{\lambda},\infty)\times\dots\times(L_{m,k}^{\lambda},\infty)$ which in turn contains $C_k^{\lambda}$. This does not, in general, hold for the upper approximation. However, the upper approximation plays an important role in estimating how close the lower approximation is to the actual limit $L_k^{\lambda}$ and is therefore also of practical interest. By having upper approximations, we are able to change the stop criterion of Algorithm \ref{alg_1} compared to the one from \citealp{informativeSCIBrannathKlugeScharpenberg}, where the calculation of the lower approximation is stopped if several consecutive calculated approximations are ``close'' to each other.  We will add the upper approximation and the new stop criterion in our \textsf{R}-package \emph{informativeSCI}, which is available from \textsf{CRAN} or \textsf{github} and can be used to calculate the informative SCI bounds. 

Before proceeding with building informative variants of Algorithm \ref{AlgorithmusMauerBretz} by repeatedly using the Primary Algorithm \ref{alg_1}, we come back to a monotonicity property for $\lambda=s$. For all $1\leq j\leq m$ and $1\leq k <K$ it holds by definition $k_j^*(k)\leq k_j^*(k+1)$. Using the definition from the sequential p-values, we can conclude that for all $\mu_j\in\mathbb{R}$ and data and decisions, we have $p_{j,k_j^*(k+1)}^s(\mu_j,x)\leq p_{j,k_j^*(k+1)}^s(\mu_j,x)$. This implies $C_k^s\supseteq C_{k+1}^s$ that yields increasing projected bounds $L_k^{s}\leq L_{k+1}^s$ component-wise. The same holds as already stated for the compatible bounds, i.e. $\bar L_{k}^s\leq \bar L_{k+1}^s$ component-wise. 

In \citealp{informativeSCIBrannathKlugeScharpenberg}, it was demonstrated that if a graphical single-stage test is given and the informative simultaneous confidence intervals are constructed using the graphical test as basis for the dual graphs and the same p-values $p_j(\mu_j)$, $\mu_j\in\mathbb{R}$, $j\in\{1,\dots,m\}$, that the test decisions made by the test induced by the informative bounds become more and more similar to those made by the underlying graphical test if $q$ approaches $0$.

In the graphical group sequential test procedures from Algorithm~\ref{AlgorithmusMauerBretz}, the p-values are ``updated'' after each stage $k$ of data collection. To obtain corresponding informative simultaneous confidence intervals where the test decisions become again more and more similar for $q$ approaching $0$, in the following Algorithm \ref{alg_repeated}, the p-values should be updated ``in the same way''. That means after each stage, the index set $D$ should be updated as in Algorithm \ref{AlgorithmusMauerBretz}. The algorithm can be executed for $\lambda\in\{r,s\}$. In Section~\ref{subsectionLambda_r} and Section~\ref{SubsectionLambda_s}, we will come to this in detail. 

\begin{algorithm}[h]
\setstretch{1.2}
\caption{ISCI for graphical GSD with $\lambda\in\{r,s\}$} \label{alg_repeated}
\KwInput{Graphical group sequential test with $K$ stages for $m$ hypotheses, information weight $q$, overall significance level $\alpha$ and nominal significance levels $\alpha_{j,k}^*(\gamma)$, $\gamma\in[0,1]$, $1\leq k\leq K$, $1\leq j\leq m$.}
Set $D=\{1,\dots,m\}$, $k=1$, and for all $1\leq j\leq m$ set $k_j^*=1$\;

\While{$|D|\geq 1$ \emph{\textbf{and}} $k\leq K$}{
Continue the trial with the hypotheses $H_j$ for $j\in D$ and compute local p-values $p_{j,k}$ and group sequential p-values $p_{j,k}^{\lambda}$\;
Run Algorithm~\ref{alg_1} with $(k_j^*)_{1\leq j\leq m}$ and $(p_{j,k_j^*}^{\lambda})_{1\leq j\leq m}$ to obtain $L^{\lambda}_{k}=(L_{1,k}^{\lambda},\dots L_{m,k}^{\lambda})$\;
Set $R_k^{\lambda}:=\{1\leq j\leq m : L_{j,k}^{\lambda}\geq 0\}$\;
Decide for each $j\in D$ whether to stop data collection and update in this case $D\rightarrow D\setminus\{j\}$ and set $\tau^*_j=k$\;
Update $k\rightarrow k+1$\;
Update for all $j\in D$ the current  stage $k_j^*\rightarrow  k$\;}
Let $\tau^*=\max_{1\leq j\leq m}\{\tau^*_{j}\}=k$  be the stage where the trial is terminated\;
\Return{$L^{\lambda} :=L_{\tau^*}^{\lambda}=(L_{1,\tau^*}^{\lambda},\dots,L_{m,\tau^*}^{\lambda})$ and index set $R^{\lambda}:=R_{\tau^*}^{\lambda}$ of rejected hypotheses.}
\end{algorithm}
Algorithm~\ref{alg_repeated} calculates (for $\lambda\in\{r,s\}$) the left-sided bounds of the smallest SCI containing 
\begin{align}
    C^{\lambda}:=C_{\tau^*}^{\lambda}:= C_{\tau^*}^{\lambda}(x):=\{\mu\in\rk: \min_{\substack{j=1,\dots,m,\\ \alpha_j^{\mu}>0}} p^{\lambda}_{j,\tau^*_j(\tau^*)}(\mu_j,x)/\alpha_j^{\mu}> 1\}~.\label{confidenceSetRepeatlyCallingPrimary}
\end{align}
Remember that the stage $\tau^*$ where the trial is terminated as well as $\tau_j^{*}$, $1\leq j\leq m$, depend on the data and decisions. As for the confidence sets from (\ref{confidenceSetEachStage}) for a stage $1\leq k<\tau^*$, we have that $p_{j,\tau_j^*}^{\lambda}(\mu_j)\geq p_{j,K}^{s}(\mu_j)$, $\mu_j\in\mathbb{R}$. This guarantees that (\ref{confidenceSetRepeatlyCallingPrimary}) is a confidence set of level $1-\alpha$. 

The \emph{informativeness} of $L_{\tau^*}^{\lambda}$ obtained by Algorithm~\ref{alg_repeated} in the sense of Definition~\ref{definitionInformativität} under the understanding of evidence from Definition~\ref{DefinitionEvidenzBretzAlgo} (set $k=\tau^*$) follows directly from Theorem~\ref{theoremMainResults}~(c).

In the next two sections, we give details on the $\lambda=r$ and $\lambda=s$ variant of Algorithm~\ref{alg_repeated} including valid starting vectors.
\subsubsection{Algorithm \ref{alg_1} for $\lambda=r$}\label{subsectionLambda_r}
In each stage $k$, $1\leq k\leq\tau^*$, Algorithm~\ref{alg_repeated} calls the Primary Algorithm~\ref{alg_1} for which starting vectors for the lower and upper approximation of $L_k^{\lambda}$ are needed. In (\ref{startVectors}) possible starting vectors are listed. However, for the lower approximation under certain assumptions the results from previous stages can be used to start the Primary Algorithm~\ref{alg_1} with a starting vector that is component-wise greater than the one from (\ref{startVectors}). By this, the number of numerical calculations can be reduced. This is related to the following problem: For Algorithm~\ref{AlgorithmusMauerBretz} with $\lambda=r$ we have argued in Section~\ref{subsectionGraphicalGSD} that continuing the data collection for a specific component despite rejection at a certain stage may lead to inconclusive rejections of hypotheses. The same holds for the $\lambda=r$ variant of Algorithm~\ref{alg_repeated}. Under similar assumptions as in Section~\ref{subsectionGraphicalGSD}, this can be prevented as stated in the following lemma.
\begin{lemma}\label{LemmaMonotonicityLambdaR}
Let $\lambda=r$ in Algorithm \ref{alg_repeated} and assume for all  $j\in\{1,\dots,m\}$ with $j\in\cup_{k=1}^{\tau^*}R_k^r$ that it holds $\tau_j^*\leq r_j:=\min\{1\leq k\leq\tau^*~|~j\in R_{k}^r\}$, i.e. if a hypothesis is rejected the first time at a certain stage the data collection for this hypothesis is stopped or has already been stopped at an earlier stage. Then, the index sets of rejected hypotheses by Algorithm \ref{alg_repeated} fulfil $R_{k}^r\subseteq R_{k+1}^r$, $1\leq k<\tau^*$. In particular, for all $j\in R_{\tau^*}^{r}$ we have for all $k$ with $r_j\leq k <\tau^*$ that $L_{j,k}^{r}\leq L_{j,k+1}^{r}$, i.e. the confidence bounds increase in these components.
\end{lemma}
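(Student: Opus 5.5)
The plan is to compare the lower fixed-point iterations of the Primary Algorithm~\ref{alg_1} at stages $k$ and $k+1$, using Theorem~\ref{theoremMainResults}(a): the iteration increases component-wise from \emph{any} valid lower starting vector and converges to $L_{k+1}^r$. It therefore suffices to build a valid starting vector $\mu^{(0)}$ for stage $k+1$ with $\mu^{(0)}_j=L_{j,k}^r$ for every $j\in R_k^r$. Then $L_{j,k+1}^r\geq \mu^{(0)}_j=L_{j,k}^r\geq 0$, which gives at once $j\in R_{k+1}^r$ and the monotonicity of the bounds.

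\textbf{Step 1 (the p-value functions of rejected hypotheses are frozen).} Take $j\in R_k^r$. Then $r_j\leq k$, and the assumption gives $\tau_j^*\leq r_j\leq k$. Hence $k_j^*(k)=k_j^*(k+1)=\tau_j^*$, so the function $\mu_j\mapsto p^r_{j,k_j^*}(\mu_j)$ is identical at stages $k$ and $k+1$. Only the components $i\notin R_k^r$ may carry new data at stage $k+1$.

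\textbf{Step 2 (fixed-point equation at stage $k$).} Pass to the limit in (\ref{eq_stepLower}), using continuity of $\nu_j$ and of the p-values (Proposition~\ref{propositionWithImportantProperties}(ii) and the properties of the repeated p-values). Together with $L_{j,k}^r\ge0$ this should give, for $j\in R_k^r$,
\begin{align*}
p^r_{j,k_j^*}(L_{j,k}^r)\leq q^{L_{j,k}^r}\nu_j(L_k^r)\,\alpha .
\end{align*}
Equality holds wherever the right-hand side is in the range where the p-value is strictly increasing. For $i\notin R_k^r$ we have $L_{i,k}^r<0$.

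\textbf{Step 3 (starting vector).} Set $\mu^{(0)}_j=L_{j,k}^r$ for $j\in R_k^r$, and $\mu^{(0)}_i=\min\{0,(p^r_{i,k_i^*(k+1)})^{-1}(\alpha_i)\}\leq 0$ for $i\notin R_k^r$, as in (\ref{startVectors}).
\begin{itemize}
\item For $i\notin R_k^r$, condition (\ref{eq_astartLower}) holds exactly as in the Remark, by Proposition~\ref{propositionWithImportantProperties}(iii).
\item For $j\in R_k^r$ we need $\nu_j(\mu^{(0)})\geq\nu_j(L_k^r)$, and component-wise monotonicity of $\nu_j$ alone is not enough, since $\mu^{(0)}_i$ may lie below $L_{i,k}^r$.
\end{itemize}

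The key observation I would prove here, and which I expect to be the main obstacle, is this. Fix the sign pattern $P=\{i:\mu_i>0\}$. Then the levels $\alpha_j^\mu$ do not depend on the actual values $\mu_i\le0$, $i\notin P$. The reason is that in the dual graph $G^\mu$, a component with $\mu_i\leq0$ only replaces $H_i$ by $H_i^{\mu_i}$ and deletes its outgoing edges. The value of $\mu_i$ enters only through $q^{\mu_i\vee0}$ and the edge weights for $i\in P$. In the update step only the $H_\ell^0$ with $\ell\in P$ are rejected, so the resulting levels depend on $(\mu_\ell)_{\ell\in P}$ and on $P$ alone.

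Both $\mu^{(0)}$ and $L_k^r$ have nonpositive components exactly off $R_k^r$, except for components with $L_{j,k}^r=0$, which belong to neither positive set. They also agree on $R_k^r$. Hence they have the same positive components with the same values, so the argument above gives $\alpha^{\mu^{(0)}}=\alpha^{L_k^r}$ and in particular $\nu_j(\mu^{(0)})=\nu_j(L_k^r)$. Together with Steps 1 and 2, this shows that (\ref{eq_astartLower}) holds for $j\in R_k^r$.

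\textbf{Step 4 (conclusion).} Theorem~\ref{theoremMainResults}(a) then yields $L_{k+1}^r\geq\mu^{(0)}$ component-wise, hence $R_k^r\subseteq R_{k+1}^r$. Iterating over $k$ gives the claimed monotonicity $L_{j,k}^r\le L_{j,k+1}^r$ for $r_j\le k<\tau^*$.
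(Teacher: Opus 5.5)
Your proposal is correct and follows essentially the same route as the paper. The proof there uses the same starting vector ($L_{j,k}^r$ on $R_k^r$ and $\min\{0,(p^r_{j,k_j^*(k+1)})^{-1}(\alpha_j)\}$ elsewhere), the same observation that $\alpha_j^{\mu}$ does not depend on components $\mu_i\le 0$ of the dual graph, the frozen p-value functions coming from $\tau_j^*\le k$, and Theorem~\ref{theoremMainResults}(a) to conclude $L_{k+1}^r\ge\mu^{(0)}$; the only cosmetic differences are that you derive the stage-$k$ fixed-point relation from continuity and the convergence in Theorem~\ref{theoremMainResults}(a) instead of citing it, and that you argue directly for each $k$ rather than by induction.
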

For a proof we refer to the Appendix. In particular, it can be seen from this proof that a valid starting vector at stage $2\leq k\leq\tau^*$ for the lower approximation is given by
\begin{align*}
    \mu_j^{(0)}=\begin{cases}
             \min\{0, (p_{j,k_j^*(k)}^{r})^{-1}(\alpha_j)\}~, & j\notin R_{k-1}^r\\
            L_{j,k-1}^r~, & j\in R_{k-1}^r
		 \end{cases}~,\quad 1\leq j\leq m~.
\end{align*}
For the upper approximation, a valid starting vector is given in (\ref{startVectors}).
If the assumptions of Lemma~\ref{LemmaMonotonicityLambdaR} are not fulfilled, the index sets of rejected hypotheses are in general not nested across the stages. That means that rejections of hypotheses may be revoked at a later stage. The final test decisions at stage $\tau^*$ are made by Algorithm~\ref{alg_1} for all hypotheses $H_j$ using all the evidence up to the last stage of data collection $\tau_j^*$, $1\leq j\leq m$. In particular, Algorithm~\ref{alg_1} provides an informative variant of the conservative testing strategy mentioned in Section~\ref{subsectionGraphicalGSD} for our hierarchical test example from Table~\ref{tab:hiertest4Hyp}. The hierarchical test is executed stage-wise. A corresponding informative variant is obtained by Algorithm~\ref{alg_1} when, after the first stage, it remains $D=\{1,2,3,4\}$.



\subsubsection{Algorithm \ref{alg_1} for $\lambda=s$}\label{SubsectionLambda_s}
Compared to Section~\ref{subsectionLambda_r} the simultaneous confidence bounds $L_k^s$, $1\leq k\leq\tau^*$ are component-wise increasing along the stages as argued earlier. In particular it always holds for $1\leq k<\tau^*$ that $R_k^s\subseteq R_{k+1}^s$. In the Appendix it is also formally shown that at stage $2\leq k\leq \tau^*$ a valid starting vector for the lower approximation is given by
\begin{align*}
    \mu^{(0)}=L_{k-1}^s~.
\end{align*}
For the upper approximation, a valid starting vector is again given in (\ref{startVectors}).

\subsubsection{Informative SCI for graphical GSD with efficient multiple adjustment}
We now come back to the aim of defining informative SCI bounds and a corresponding powerful induced test, using for all hypotheses the evidence up to the last or current stage of data collection. As for Algorithm~\ref{alg_compromise_compatible} for reasons of simplicity we will formulate Algorithm~\ref{alg_compromise} as a ``follow-up algorithm'' to Algorithm~\ref{alg_1} with $\lambda=s$. Of course, the follow-up algorithm can also be executed after each stage $k$, $1\leq k\leq\tau^*$, of Algorithm~\ref{alg_1} with $\lambda=s$. The Algorithm~\ref{alg_compromise} can be interpreted as an ``informative'' version of Algorithm~\ref{alg_compromise_compatible}.
\begin{algorithm}[h]
\setstretch{1.2}
\caption{ISCI for graphical GSD with efficient multiple adjustment} \label{alg_compromise}
\KwInput{Graphical group sequential test with $K$ stages for $m$ hypotheses, lower confidence bounds $L^s=L_{\tau^*}^s=(L_{1,\tau^*}^s,\dots,L_{m,\tau^*}^s)$ obtained from Algorithm~\ref{alg_repeated} for $\lambda=s$ with corresponding last conducted stages $\tau^*_j=\tau^*_j(x)$ for each hypothesis $H_j$, $1\leq j\leq m$, and p-values $p_{j,\tau^*_j}^r$, $1\leq j\leq m$, information weight $q$, significance level $\alpha$.}

\For{$j=1$ \emph{\textbf{to}} $m$}{
Define $\tilde\omega_j(\mu_j):=q^{(\mu_j\vee0)}\nu_j((L_{1,\tau^*}^s,\dots,L_{j-1,\tau^*}^s,\mu_j,L_{j+1,\tau^*}^s,\dots,L_{m,\tau^*}^s))$, $\mu_j\in\mathbb{R}$\;
\eIf{$\tilde{\omega}_j(0)\alpha>0$}{
\vspace{0.3em}
Perform a bisection search to find the unique $L_{j,\tau^*}^c\leq L_{j,\tau^*}^s$ such that
\begin{align}
    & p_{j,\tau^*_j}^r(L_{j,\tau^*}^c)= \tilde\omega_j(L_{j,\tau^*}^c)\alpha. \label{keyEquationMixtureInformative}
\end{align}}{
Set $L_{j,\tau^*}^c:=-\infty$\;
}
}
Set $R^c:=\{j\in\{1,\dots,m\}: L_{j,\tau^*}^c\geq 0\}$\;
\Return{$L^c:=L_{\tau^*}^c=(L_{1,\tau^*}^c,\dots,L_{m,\tau^*}^c)$ and index set $R^c$ of rejected hypotheses.}
\end{algorithm}
First, Algorithm~\ref{alg_compromise} calculates the bounds $L_{\tau^*}^s$ by calling Algorithm~\ref{alg_1} for $\lambda=s$. For each hypothesis $H_j$, $1\leq j\leq m$, holds: The larger each $L_{i,\tau^*}^s$, $i\in\{1,\dots,m\}$, $i\neq j$, the larger is $\tilde{\omega}_j(\mu_j)$, $\mu_j\in\mathbb{R}$. Roughly speaking, the evidence from earlier stages (i.e., higher bounds $L_{i,\tau^*}^s$, $i\neq j$) is used to test $H_j$ at a higher level and finally obtain larger bounds $L_{j,\tau^*}^c$. In the Appendix we argue that for each component $j\in\{1,\dots,m\}$ there always exists a \emph{unique} $L_{j,\tau^*}^c\in\mathbb{R}_{-\infty}$, $L_{j,\tau^*}^c\leq L_{j,\tau^*}^s$, fulfilling (\ref{keyEquationMixtureInformative}). Additionally, it is argued that, for all components $j\in\{1,\dots, m\}$ with $\tilde{\omega}_j(0)\alpha>0$, the bisection search from Algorithm~\ref{alg_compromise} can be performed in practice because there exists a real-value search area given below. In particular, $L_{j,\tau^*}^c>-\infty$ holds in this case.
\begin{remark}\label{remarkRealValuedSearchAreaBisectionMixture}
    For all components $j\in\{1,\dots, m\}$ from Algorithm~\ref{alg_compromise} with $\tilde{\omega}_j(0)\alpha>0$, the bound $L_{j,\tau^*}^c$ is contained in 
    \begin{align}
        L_{j,\tau^*}^c\in\left[(p_{j,\tau_j^*}^r)^{-1}(\tilde\omega_j(0)\alpha)\,, L_{j,\tau^*}^s \right]~,\label{startVectorMixtureInformative}
    \end{align}
    where $L_{j,\tau^*}^c>-\infty$.
\end{remark}

It remains to argue the \emph{informativeness} of Algorithm~\ref{alg_compromise}. 
\begin{proposition}\label{PropositionInformativenessMixture}
The simultaneous confidence bounds $L_{\tau^*}^c$ obtained by Algorithm~\ref{alg_compromise} satisfy Definition~\ref{definitionInformativität} with the understanding of evidence from Definition~\ref{DefinitionEvidenzMixture}.
\end{proposition}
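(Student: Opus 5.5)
We have to verify parts (a) and (b) of Definition~\ref{definitionInformativität} for $L^c_{\tau^*}$, with evidence understood as in Definition~\ref{DefinitionEvidenzMixture}. Both parts are read off from the defining equation (\ref{keyEquationMixtureInformative}), combined with the monotonicity properties of $\nu_j$ (Proposition~\ref{propositionWithImportantProperties}) and the informativeness of $L^s_{\tau^*}$ (Theorem~\ref{theoremMainResults}~(c)).

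\textbf{Part (a).} Suppose $H_j$ has no gatekeeper. We argue as for the single-stage dual graphs of \citealp{informativeSCIBrannathKlugeScharpenberg}: $\nu_j(\mu)$ is bounded below by a positive constant for every $\mu$, since $\alpha_j>0$ and $\alpha_j^\mu\geq\alpha_j$ when $\mu_j\le 0$. Hence $\tilde\omega_j(0)\alpha\geq\alpha_j>0$, and Remark~\ref{remarkRealValuedSearchAreaBisectionMixture} gives $L^c_{j,\tau^*}>-\infty$.

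Now suppose some gatekeeper $H_i$ has $L^c_{i,\tau^*}>0$. Since $L^c_{i,\tau^*}\le L^s_{i,\tau^*}$, also $L^s_{i,\tau^*}>0$. In the dual graph evaluated at $(L^s_{1,\tau^*},\dots,0,\dots,L^s_{m,\tau^*})$, the hypothesis $H_i^0$ is then treated as rejected, so a positive fraction $(1-q^{L^s_{i,\tau^*}})$ of its level flows along the original path towards $H_j$. Here I will use the structural fact, from \citealp{informativeSCIBrannathKlugeScharpenberg}, that $\nu_j>0$ as soon as one gatekeeper has positive coordinate; this is exactly the property behind their proof of (a). It yields $\tilde\omega_j(0)>0$, and Remark~\ref{remarkRealValuedSearchAreaBisectionMixture} again gives finiteness.

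\textbf{Part (b).} Let $x'$ carry more evidence than $x$ against $H_j$, with the same or stronger evidence for every $i\neq j$, and suppose $L^c_{j,\tau^*}(x)>-\infty$. I would proceed in three steps.

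First, by Definition~\ref{DefinitionEvidenzMixture}(c) the sequential p-values of all components in $x'$ are $\le$ those in $x$, and the stopping stages agree. Hence $C^s(x')\subseteq C^s(x)$, because $\alpha^\mu_i$ does not depend on the data. It follows that $L^s_{i,\tau^*}(x')\ge L^s_{i,\tau^*}(x)$ for all $i$.

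Second, since $\nu_j$ is non-decreasing in each component, $\tilde\omega_j^{x'}(\mu_j)\ge\tilde\omega_j^{x}(\mu_j)$ for all $\mu_j$.

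Third, define $f_x(\mu_j)=p^r_{j,\tau_j^*}(x,\mu_j)-\tilde\omega^x_j(\mu_j)\alpha$. This function is strictly increasing where $p^r<1$: the repeated p-value is strictly increasing there, and $\tilde\omega_j$ is non-increasing in $\mu_j$ for $\mu_j>0$ (the factor $q^{\mu_j}$ decreases while the $\nu_j$-dependence on its own coordinate is handled by Proposition~\ref{propositionWithImportantProperties}). This monotonicity is what underlies the uniqueness claim made in the Appendix, and I will reuse it. At $\mu=L^c_{j,\tau^*}(x)$ we have $p^r<1$, because the right-hand side $\le\alpha<1$. By Definition~\ref{DefinitionEvidenzMixture}(b) together with the second step,
\[
f_{x'}\bigl(L^c_{j,\tau^*}(x)\bigr)<f_x\bigl(L^c_{j,\tau^*}(x)\bigr)=0 .
\]
The root of $f_{x'}$ must therefore lie strictly to the right, which gives $L^c_{j,\tau^*}(x')>L^c_{j,\tau^*}(x)$.

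\textbf{Main obstacle.} The delicate step is the strict monotonicity of $\mu_j\mapsto p^r_{j}(\mu_j)/\tilde\omega_j(\mu_j)$, including the case $\mu_j\le 0$ and the coupling of $\nu_j$ to its own coordinate. I would take the uniqueness argument for (\ref{keyEquationMixtureInformative}) from the Appendix as a black box and, if needed, re-derive it from the single-stage argument of \citealp{informativeSCIBrannathKlugeScharpenberg} with $p_j$ replaced by $p^r_{j,\tau_j^*}$. A second point to check is that the cap $L^c_{j,\tau^*}\le L^s_{j,\tau^*}$ is not active in a way that breaks strictness. This is fine because $L^s$ is itself strictly monotone by Theorem~\ref{theoremMainResults}~(c), and $p^r\ge p^s$ guarantees that the root lies below the cap.
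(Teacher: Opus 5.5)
Your proposal is correct and follows essentially the same route as the paper. Part (a) goes via $\tilde\omega_j(0)\alpha>0$, obtained either from $\alpha_j>0$ or from level flowing out of a gatekeeper with $L^s_{i,\tau^*}\ge L^c_{i,\tau^*}>0$. Part (b) goes via $C^s_{\tau^*}(x')\subseteq C^s_{\tau^*}(x)$, hence $\tilde\omega_j(\cdot,x)\le\tilde\omega_j(\cdot,x')$, followed by a comparison of the roots of the key equation, which you make more explicit than the paper through $f_x$.

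The only blemish is your closing remark. $L^s$ need not be strictly monotone under this notion of evidence, since part (c) of the evidence definition only gives a weak inequality for the sequential p-values. The remark is also unnecessary: $L^c_{j,\tau^*}\le L^s_{j,\tau^*}$ is a property of the unique root (coming from $p^r\ge p^s$), not a clipping step that could interfere with strictness.
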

The proof of the informativeness is straightforward. A formal proof is given in the Appendix.


\section{Median conservative estimators}\label{SectionMedianConservatEstimators}
This section deals with the aim of constructing estimators of treatment effects for the graphical group sequential test procedures presented in the previous sections. The importance of reliable estimation of treatment effects (with minimal or no bias) in adaptive (and group sequential) trials (with multiple hypotheses) is emphasized in a draft of the \citealp{ICH-E20} guideline. The evaluation of the bias, variability, and the mean squared error of the estimators is also highlighted. 

Estimators that are conservatively biased and thus underestimate the treatment effect can also be considered acceptable.

\begin{definition}[Common median conservative estimator]\label{definitionMedianConservative}
    Let $\theta=(\theta_1,\dots,\theta_m)\in\mathbb{R}^m$ be the true and unknown parameter of interest. An estimator $\hat{\vartheta}=(\hat{\vartheta}_1,\dots,\hat{\vartheta}_m)$ for $\theta$ is called \emph{(common) median conservative} if 
    \begin{align*}
        \mathbb{P}_{\theta}(\hat{\vartheta}\leq\theta)=\mathbb{P}_{\theta}(\cap_{i=1}^m \{\hat{\vartheta}_i\leq\theta_i\})\geq 0.5~.
    \end{align*}
\end{definition}
An estimator that meets Definition~\ref{definitionMedianConservative} can be conservative for two reasons. First, the probability of observing an estimation that is component-wise smaller than $\theta$ can be strictly greater than $0.5$. Second, because of $\mathbb{P}_{\theta}(\hat{\vartheta}_j\leq\theta_j)\geq\mathbb{P}_{\theta}(\cap_{i=1}^m \{\hat{\vartheta}_i\leq\theta_i\})$ it is likely that the estimator is median conservative in the individual components $j\in\{1,\dots,m\}$.

If a significance level of $\alpha_M:=0.5$ is considered and we have a SCI $L=(L_1,\infty)\times\dots\times(L_m,\infty)$ for $\theta$ with coverage probability of at least $1-\alpha_M$ common median conservative estimators can be derived directly from the SCI, namely by $\hat{\vartheta}_j:=L_j$, $j\in\{1,\dots,m\}$. This can be used to construct for all graphical group sequential test procedures presented in this paper (namely Algorithm~\ref{AlgorithmusMauerBretz} for $\lambda\in\{r,s\}$, Algorithm~\ref{alg_compromise_compatible}, Algorithm~\ref{alg_repeated} for $\lambda\in\{r,s\}$ and Algorithm~\ref{alg_compromise}) corresponding (common) median conservative estimators (at each stage $k$, $1\leq k\leq K$). The strategy is the following: First, almost the same underlying graphical test is considered, with the only difference of another overall significance level, namely $\alpha_M:=0.5$ instead of $\alpha$. This means the transition weights $(g_{ij})_{ij}$ remain the same and only the initial local levels are scaled, i.e. $\alpha_j=\alpha_j(I)=\omega_i(I)\cdot\alpha$ is replaced by $\omega_j(I)\cdot\alpha_M$, $I=\{1,\dots,m\}$. Looking at the update rules of the graphs (see, for instance, the ``Update Graph'' part in Algorithm~\ref{AlgorithmusMauerBretz}) it can easily be seen that also the levels $\alpha_j(J)=\omega_i(J)\cdot\alpha$ are replaced by $\alpha_j(J)=\omega_i(J)\cdot\alpha_M$ for all $J\subseteq I$. The p-values $p_{j,k}^r(\mu_j)$ and $p_{j,k}^s(\mu_j)$, $\mu_j\in\mathbb{R}$, $j\in\{1,\dots,m\}$, $1\leq k\leq K$, remain unchanged. Applying this to our hierarchical test introduced in Section~\ref{subsectionGraphicalGSD}, the starting levels $\omega_1(I)\cdot\alpha_M=1\cdot \alpha_M=0.5$ and $\omega_j(I)\cdot\alpha_M=0\cdot0.5=0$ for $j=2,3,4$ are obtained for constructing an SCI. 

The following Lemma\ref{LemmaStrategiesEstimators} summarize for each algorithm introduced in this paper the described strategy for constructing corresponding (common) median conservative estimators. As we will describe later, in most cases, only the estimators based on the informative algorithms are sensible.
\begin{lemma}\label{LemmaStrategiesEstimators} 
For the graphical group sequential tests, one can obtain (common) median conservative strategies by the following procedures:
\begin{itemize}
    \item[(a)] Assume that the graphical GSD from Algorithm~\ref{AlgorithmusMauerBretz} for $\lambda\in\{r,s\}$ and $\alpha$ has been executed up to stage $k$, $1\leq k\leq K$. For corresponding (common) median conservative estimators at stage $k$, perform now the graphical test from \citealp{bretz2009graphical}, i.e., the while-loop part from Algorithm~\ref{AlgorithmusMauerBretz} with the initial levels $\alpha_j(I)=\omega_j(I)\cdot\alpha_M$, $I=\{1,\dots,m\}$, $j\in I$, and the fixed current stages $k_j^{*}(k)$. This gives an index set of rejected hypotheses $\bar{R}_k^{\lambda}(\alpha_M)$. Calculating the bounds from (\ref{compatibleBounds}) with this index set and the scaled levels, (common) median conservative estimators at stage $k$ are obtained by $\hat{\vartheta}_{j,k}:=\bar{L}_{j,k}^{\lambda}(\alpha_M)$, $j\in\{1,\dots,m\}$.
    \item[(b)] Assume that the graphical GSD with efficient multiple adjustment from Algorithm~\ref{alg_compromise_compatible} for the overall level $\alpha$ has been executed (at stage $\tau^*$). For corresponding median-conservative estimators, perform similar to (a) for $\lambda=s$ the graphical test from \citealp{bretz2009graphical}, i.e., the while-loop part from Algorithm~\ref{AlgorithmusMauerBretz} for $\lambda=s$, with the initial modified scaled levels based on the overall level $\alpha_M$ and the fixed stages $\tau_j^*$, $j\in\{1,\dots,m\}$. This gives the index set of rejected hypotheses $\bar{R}^s(\alpha_M)=\bar{R}_{\tau^*}^s(\alpha_M)$. Performing now Algorithm~\ref{alg_compromise_compatible} based on this index set and the scaled levels, one obtains the index set $\bar{R}^c(\alpha_M)$. Calculating the bounds from (\ref{mixtureCompatibleBounds}) with these index sets and the scaled levels, (common) median conservative estimators are obtained by $\hat{\vartheta}_{j,\tau^*}:=\bar{L}_{j,\tau^*}^{c}(\alpha_M)$, $j\in\{1,\dots,m\}$.
    \item[(c)] Assume that Algorithm~\ref{alg_repeated} has been executed for $\lambda\in\{r,s\}$ up to stage $k$, $1\leq k\leq K$. Perform now the Primary Algorithm~\ref{alg_1} with the (fixed) current stages $k_j^*(k)$ and replace in this algorithm $\alpha$ by $\alpha_M=0.5$. Then, (common) median conservative estimators are obtained by $\hat{\vartheta}_{j,k}:=L_{j,k}^{\lambda}(\alpha_M)$, $j\in\{1,\dots,m\}$.
    \item[(d)] Assume that Algorithm~\ref{alg_compromise}  for the overall level $\alpha$ has been executed (at stage $\tau^*$). For corresponding median-conservative estimators, perform now similar to (c) the Primary Algorithm~\ref{alg_1} for $\lambda=s$, with $\alpha_M$ and the fixed stages $\tau_j^*$, $j\in\{1,\dots,m\}$. This gives $L^s(\alpha_M)=L_{\tau^*}^{s}(\alpha_M)$. Perform now Algorithm~\ref{alg_compromise} with these bounds and the level $\alpha_M=0.5$. Then, (common) median conservative estimators are obtained by $\hat{\vartheta}_{j,\tau^*}:=L_{j,\tau^*}^{c}(\alpha_M)$, $j\in\{1,\dots,m\}$.
\end{itemize}
\end{lemma}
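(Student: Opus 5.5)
The plan is to reduce all four parts (a)--(d) to one observation: every construction in the paper yields a lower confidence bound vector $L(\alpha)$ whose simultaneous coverage is at least $1-\alpha$, and this coverage argument never uses $\alpha<0.5$. Setting $\alpha=\alpha_M=0.5$ then gives $\mathbb{P}_\theta(\cap_j\{L_j(\alpha_M)<\theta_j\})\geq 0.5$. Since $\{L_j<\theta_j\}\subseteq\{L_j\leq\theta_j\}$, this gives $\mathbb{P}_\theta(\hat\vartheta\leq\theta)\geq 0.5$ for $\hat\vartheta:=L(\alpha_M)$, which is exactly Definition~\ref{definitionMedianConservative}. So the proof consists of verifying that each coverage statement in the paper holds verbatim for an arbitrary overall level $\alpha\in(0,1)$. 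The scaled graph with initial levels $\omega_j(I)\alpha_M$ is again a valid graphical test, because the update rules are linear in the levels, so $\alpha_j(J)=\omega_j(J)\alpha_M$ with $\sum_j\omega_j(J)\leq1$.

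For (a) and (b), I will argue as for (\ref{compatibleBounds}) and (\ref{mixtureCompatibleBounds}). The p-values $p^\lambda_{j,k_j^*}(\mu_j)$ and $p^r_{j,\tau_j^*}(\mu_j)$ are unchanged, and each dominates $p^s_{j,K}(\mu_j)$. By stochastic dominance under $\theta$, the latter is a valid p-value for $H_j^{\theta_j}$. The bounds from \citealp{SB08} built on the graphical closure test at level $\alpha_M$ with p-values $p^s_{j,K}$ therefore have coverage at least $1-\alpha_M$. Monotonicity of inverse p-values then shows that $\bar L^\lambda_{j,k}(\alpha_M)$ and $\bar L^c_{j,\tau^*}(\alpha_M)$ are component-wise no larger than those bounds. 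For (b) I use the chain $\bar L^c\leq\bar L^s$ exactly as stated after (\ref{mixtureCompatibleBounds}). The $\alpha_j(\varnothing)$ levels also have to be scaled to sum to $\alpha_M$.

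For (c) and (d), I use the confidence set $C_k^\lambda$ from (\ref{confidenceSetEachStage}) with $\alpha_j^\mu$ replaced by the dual-graph levels at $\alpha_M$. By Proposition~\ref{propositionWithImportantProperties}(i) these sum to $\alpha_M$. At $\mu=\theta$, the Bonferroni bound together with the validity of $p^s_{j,K}$ gives $\mathbb{P}_\theta(\theta\notin C)\leq\sum_j\alpha_j^\theta=\alpha_M$. The projection $L^\lambda_k(\alpha_M)$ is the limit from Theorem~\ref{theoremMainResults}(a), and the rectangle it defines contains $C$, so coverage carries over. For (d), I need $L^c_{\tau^*}(\alpha_M)$ to define a rectangle that still contains a level-$(1-\alpha_M)$ confidence set. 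The natural candidate is
\[
C^c:=\{\mu:\ \exists j \text{ with } p^r_{j,\tau_j^*}(\mu_j)>\tilde\omega_j(\mu_j)\alpha_M\},
\]
or, more safely, $C^c:=\{\mu\in C^s:\ldots\}$. Its projection must be shown to be bounded by $L^c$, using the monotonicity of $\nu_j$ in Proposition~\ref{propositionWithImportantProperties}(ii).

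The main obstacle is (d). Here the coverage of $L^c$ is not stated earlier, so I will prove it directly. On the event $\theta\in C^s(\alpha_M)$ we have $\theta>L^s$ component-wise, hence $\theta_i\geq L^s_{i}$ for $i\neq j$. Monotonicity of $\nu_j$ then gives $\alpha_j^\theta\geq\tilde\omega_j(\theta_j)\alpha_M$ for each $j$. The core of the argument is the following implication: if $\theta_j\leq L^c_j$ for some $j$, then $p^r_{j}(\theta_j)\leq\tilde\omega_j(\theta_j)\alpha_M\leq\alpha_j^\theta$. This uses uniqueness of the root of (\ref{keyEquationMixtureInformative}), which makes $p^r_j(\mu)-\tilde\omega_j(\mu)\alpha_M$ change sign only at $L^c_j$. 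I will then bound the resulting event by the rejection of $H^\theta$ in a Bonferroni test with levels $\alpha_j^\theta$ summing to $\alpha_M$, now using the p-values $p^r_{j,\tau_j^*}\geq p^s_{j,K}$. Care is needed with the intersection with $\{\theta\in C^s\}$; I expect to handle it by showing that the rejection event of this Bonferroni test already contains the complement of coverage for both $L^s$ and $L^c$.
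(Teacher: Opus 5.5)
For (a)--(c) your argument is the paper's. The paper gives no separate proof of this lemma. It relies on the observation stated just before it: each construction is a level-$(1-\alpha)$ SCI for every $\alpha$. Rerunning it with levels $\omega_j(J)\alpha_M$ (same p-values, rescaled $\alpha_j(\varnothing)$) therefore gives coverage at least $1/2$, and $\{\theta\in\mathrm{SCI}\}\subseteq\{\hat\vartheta\le\theta\}$. Your remark that nothing uses $\alpha<0.5$ holds under the paper's standing assumption that $\gamma\mapsto\alpha^*_{j,k}(\gamma)$ is strictly increasing and continuous on all of $(0,1]$. For O'Brien--Fleming-type spending this was only verified on $(0,0.318)$, so at $\alpha_M=0.5$ it is a genuine assumption.

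For (d) you make things much harder than necessary, and your closing step does not work as phrased. The missing observation is that $L^c_{j,\tau^*}(\alpha_M)\le L^s_{j,\tau^*}(\alpha_M)$ for every $j$. The paper proves this in the appendix lemma on Algorithm~\ref{alg_compromise}, and it is built into the search interval of Remark~\ref{remarkRealValuedSearchAreaBisectionMixture}. Since $p^r_{j,\tau^*_j}\ge p^s_{j,\tau^*_j}$ and $L^s$ satisfies its fixed-point equation, the increasing function $\mu_j\mapsto p^r_{j,\tau^*_j}(\mu_j)-\tilde\omega_j(\mu_j)\alpha_M$ is nonnegative at $L^s_{j,\tau^*}$, so its root lies below $L^s_{j,\tau^*}$. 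Hence the rectangle $(L^c_1,\infty)\times\dots\times(L^c_m,\infty)$ contains $(L^s_1,\infty)\times\dots\times(L^s_m,\infty)\supseteq C^s_{\tau^*}(\alpha_M)$, and coverage at least $1-\alpha_M$ follows at once. This is the same $\bar L^c\le\bar L^s$ argument you used yourself in (b).

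In fact your own first step already finishes (d). On $\{\theta\in C^s\}$ you have $\theta_j>L^s_j\ge L^c_j$, so the event $\{\theta_j\le L^c_j\}$ you go on to analyze is empty there, and the chain through $\tilde\omega_j(\theta_j)\alpha_M\le\alpha^\theta_j$ is vacuous.

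The resolution you sketch at the end would fail as stated:
\begin{itemize}
\item The rejection region of the Bonferroni test built on $p^r_{j,\tau^*_j}$ does not contain the non-coverage event of $L^s$. Already for $m=1$, any $\theta\in(L^r,L^s]$ is not covered by $L^s$ but is not rejected with $p^r$. That event is controlled by the test with the smaller p-values $p^s_{j,\tau^*_j}$.
\item Bounding $\{\theta\notin C^s\}$ and the $p^r$-rejection event separately only gives $2\alpha_M=1$.
\end{itemize}
If you want a direct argument, the correct containment runs the other way. A $p^r$-rejection at $\theta$ with levels $\alpha^\theta_j$ implies a $p^s$-rejection, i.e.\ $\theta\notin C^s$, so everything lies inside $\{\theta\notin C^s\}$.

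Finally, your candidate set $C^c$ with ``$\exists j$'' is a union of half-lines in the coordinates, not a non-rejection region; it should read ``for all $j$ with $\tilde\omega_j(\mu_j)>0$''. With $L^c\le L^s$ you need no such set at all.
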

For (a) and (b) of Lemma~\ref{LemmaStrategiesEstimators} for all hypotheses rejected at the level $\alpha_M=0.5$, the estimators remain stuck at $0$ as long as not all hypotheses can be rejected. Because of increasing bounds with increasing evidence (property (b) in Definition~\ref{definitionInformativität}) this is in general not the case for the informative variants (c) and (d). They can serve as \emph{informative (common) median-conservative estimators}.

Additionally, please note that for (b) and (d) in Lemma~\ref{LemmaStrategiesEstimators}, as mentioned earlier, one can also calculate the SCI bounds and thus the (common) median conservative estimators at each stage $k$, $1\leq k\leq K$. 

If the estimators are calculated after each stage $k$, $1\leq k\leq K$, regardless of which variant (a) to (d) is used, the question of monotonicity of the estimators arises. For (a) and (c) for $\lambda=s$ it holds $\hat{\vartheta}_{j,k}\leq\hat{\vartheta}_{j,k+1}$, for all $1\leq k<K$ and $j\in\{1,\dots,m\}$ which follows directly from the properties of the SCI bounds. For (a) and (c), for $\lambda=r$, there is only the chance of monotonicity if the data collection for a hypothesis is stopped at the first stage where the hypothesis is rejected, i.e., the confidence bound fulfils $L_{j,k}^{r}(\alpha_M)\geq 0$ or $\bar{L}_{j,k}^{r}(\alpha_M)\geq 0$, $j\in\{1,\dots,m\}$. If one proceeds as described in Lemma~\ref{LemmaMonotonicityLambdaR}, i.e. one stops the data collection if the first time $L_{j,k}^{r}(\alpha)\geq 0$ is met, which implies monotonicity in those components, this does not automatically transfer to the estimators $\hat{\vartheta}_{j,k}:=L_{j,k}^r(\alpha_M)$. For the efficient multiple adjustment strategies (b) and (d), there is also no guaranteed monotonicity. 


\section{Summary and Discussion}\label{sectionSummaryAndDiscussion}
This paper consists of three main parts: the introduction and discussion of graphical group-sequential tests; the exploration of corresponding simultaneous confidence intervals, with a focus on informative bounds; and strategies for constructing common median-conservative estimators. Two of the investigated graphical group sequential tests are based on those from \citealp{maurer2013multiple}. One procedure works with repeated p-values and the other one with sequential p-values ($\lambda=r$ and $\lambda=s$ variant of Algorithm~\ref{AlgorithmusMauerBretz}). The sequential variant is more powerful because it allows for rejecting a hypothesis at a current stage via a local p-value obtained at an earlier stage, and all rejections that have already been made at earlier stages are retained. Thus, the sequential option consistently allows the use of evidence that comes only from data from earlier stages. In comparison, the $\lambda=r$ variant of Algorithm~\ref{AlgorithmusMauerBretz} is generally backward inconsistent. Rejections from earlier stages are kept despite possible insufficient evidence at subsequent stages, but a rejection at the current stage of a hypothesis that has not yet been rejected is only possible by using all the data up to the current stage. A natural desire is to have a test strategy that consistently considers for all hypotheses the evidence of all the data up to the current stage of data collection, and is powerful. This has motivated us to introduce the graphical group sequential procedure with efficient multiple adjustments given by Algorithm~\ref{alg_compromise_compatible}.

For the three introduced graphical group sequential designs, corresponding simultaneous confidence intervals have been investigated. This includes compatible bounds, which are in general not informative, and bounds that are not exactly compatible with the original underlying graphical group sequential test but are informative. All informative algorithms presented are based on a so-called \emph{primary algorithm} that is an extension of the algorithm from \citealp{informativeSCIBrannathKlugeScharpenberg}. We have added the calculation of a lower conservative approximation by the calculation of an anti-conservative upper approximation. This allows us to estimate the precision of the lower approximation and offers the possibility of a new stop criterion. One key property of the presented SCIs is that they can be calculated after each stage of the trial. We have investigated monotonicity properties across the stages. The compatible and informative SCI bounds corresponding to the sequential variant of Algorithm~\ref{AlgorithmusMauerBretz} are the only ones that are always increasing in all components across the stages. For the newly introduced efficient multiple adjustment strategies, this does not generally apply, as the local p-values at the current stage are always decisive for the size of the SCI bounds.

In Section~\ref{SectionMedianConservatEstimators}, we have discussed common median conservative estimators. The need of unbiased estimation in adaptive and group sequential trials is strongly emphasised in the \citealp{ICH-E20} guideline. The presented common median conservative estimators that were based on the informative SCI bounds have the decisive advantage that except for negligible cases the treatment effects of rejected hypotheses are not estimated at $0$. This does not apply to the estimators based on the compatible bounds. 

For the calculation of the informative SCI bounds and the common median conservative estimators, the \textsf{R}-package \emph{informativeSCI} will be extended.

The following extensions to the concepts and procedures introduced in this paper are conceivable: First, for the Primary Algorithm~\ref{alg_1}, for the different hypotheses, different information weights $q_j\in(0,1)$, $j=1,\dots,m$, could be used as mentioned in \citealp{informativeSCIBrannathKlugeScharpenberg}. This also applies to all strategies presented in this paper for constructing informative bounds for graphical group-sequential designs. Since group sequential designs are special cases of adaptive (group sequential) designs, it could be investigated how to extend the concept of informative SCI bounds and median conservative estimators of this paper to adaptive graph-based multiple testing procedures like those from \citealp{klinglmueller2014adaptive}. In addition, it could be investigated how the concept could be transferred to (one-stage, group sequential, and adaptive) monotone and non-monotone closure tests. For closed testing in two-stage adaptive designs \cite{magirr2013simultaneous} have suggested compatible simultaneous confidence intervals. A necessary condition for these bounds to be informative is that all hypotheses considered in the second stage are rejected. One could investigate how to extend the concept of informative bounds introduced in this paper to obtain SCI bounds that are also informative in the case that not all the hypotheses considered in the second stage are rejected.

\newpage

\section*{Acknowledgement}
The authors gratefully acknowledge the support of the Leibniz ScienceCampus Bremen Digital Public Health (www.digital-public-health.de), which is jointly funded by the Leibniz Association (W72/2022), the Federal State of Bremen, and the Leibniz Institute for Prevention Research and Epidemiology – BIPS.
\bibliographystyle{apalike}
\bibliography{references} 
\newpage
\appendix
\section{Numerical Implementation}
The developed R-package informativeSCI, which is available from CRAN or GitHub (\url{https://github.com/LianeKluge/informativeSCI}), will be extended to the calculation of the informative simultaneous confidence intervals for graphical group sequential test procedures.

For the numerical implementation, further properties of the repeated and sequential p-values were needed, which are proven in the following. 
\begin{lemma}
    Under the assumptions of the main article, we have for a component $1\leq j\leq m$ and a stage $1\leq k\leq K$:
    \begin{itemize}
        \item[(a)] The repeated p-values $p_{j,k}^r$ are continuous, strictly increasing functions (in $\mu_j$) that are bijective from $[-\infty, p_{j,k}^{-1}(\alpha_{j,k}^*(1))]$ to $[0,1]$. The inverse function is continuous and strictly increasing and given by $$(p_{j,k}^r)^{-1}(\gamma)=p_{j,k}^{-1}(\alpha_{j,k}^*(\gamma)),\quad\gamma\in[0,1];$$
        \item[(b)] The sequential p-values $p_{j,k}^s$ are continuous, strictly increasing functions (in $\mu_j$) that are bijective from $[-\infty, \max_{s=1,\dots,k}\{p_{j,s}^{-1}(\alpha_{j,s}^*(1))\}]$ to $[0,1]$. The inverse function is continuous and strictly increasing and given by $$(p_{j,k}^s)^{-1}(\gamma)=\max\left\{(p_{j,s}^r)^{-1}(\gamma):s=1,\dots,k\right\}=\max\left\{p_{j,s}^{-1}(\alpha_{j,s}^*(\gamma)):s=1,\dots,k\right\},\quad\gamma\in[0,1];$$
    \end{itemize}
\end{lemma}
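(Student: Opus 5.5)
The plan is to reduce both parts to two elementary monotone maps: the shifted local p-value $\mu_j\mapsto p_{j,k}(\mu_j)=1-\Phi((\hat\theta_{j,k}-\mu_j)/\mathrm{SE}_{j,k})$, and the nominal-level map $\gamma\mapsto\alpha_{j,k}^*(\gamma)$. The first is a continuous, strictly increasing bijection from $[-\infty,\infty]$ onto $[0,1]$, with the conventions $p_{j,k}(-\infty)=0$ and $p_{j,k}(\infty)=1$. Its inverse $p_{j,k}^{-1}$ is therefore also continuous and strictly increasing. By the standing assumption of the paper, the second is continuous and strictly increasing on $(0,1]$. We extend it by $\alpha_{j,k}^*(0)=0$; this extension is continuous, because the spent levels tend to $0$ as $\gamma\to 0$ and the nominal levels are bounded by the cumulative spent levels.

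For (a), fix $\mu_j$ with $p_{j,k}(\mu_j)\le\alpha_{j,k}^*(1)$. The set $\{\gamma: p_{j,k}(\mu_j)>\alpha_{j,k}^*(\gamma)\}$ is an initial interval $[0,\gamma_0)$, by strict monotonicity of $\alpha_{j,k}^*$. Continuity then forces $\alpha_{j,k}^*(\gamma_0)=p_{j,k}(\mu_j)$, so the supremum satisfies $p_{j,k}^r(\mu_j)=\gamma_0=(\alpha_{j,k}^*)^{-1}(p_{j,k}(\mu_j))$. The case $\gamma_0=0$ corresponds to $\mu_j=-\infty$, and at the right endpoint $\mu_j=p_{j,k}^{-1}(\alpha_{j,k}^*(1))$ we get $p^r_{j,k}=1$. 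Hence on $[-\infty,p_{j,k}^{-1}(\alpha_{j,k}^*(1))]$ we can write $p^r_{j,k}=(\alpha^*_{j,k})^{-1}\circ p_{j,k}$. This is a composition of continuous strictly increasing bijections $[-\infty,p_{j,k}^{-1}(\alpha_{j,k}^*(1))]\to[0,\alpha_{j,k}^*(1)]\to[0,1]$, so it is itself one. Its inverse is the stated $p_{j,k}^{-1}\circ\alpha^*_{j,k}$, and the inverse of a continuous strictly monotone bijection between intervals is continuous. Beyond the endpoint the supremum equals $1$, so we can extend $p^r_{j,k}$ by the constant $1$ there.

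For (b), use $p^s_{j,k}=\min_{s\le k}p^r_{j,s}$, with each $p^r_{j,s}$ extended by $1$ to the right of its endpoint $b_s:=p_{j,s}^{-1}(\alpha_{j,s}^*(1))$. Each extended function is continuous and non-decreasing on $[-\infty,\infty]$, so the minimum is continuous and non-decreasing. On $[-\infty,\max_s b_s]$ the minimum is strictly increasing. To see this, take $\mu<\mu'\le\max_s b_s$ and let $s'$ attain the minimum at $\mu'$. If $p^r_{j,s'}(\mu')<1$, then $\mu'<b_{s'}$, which lies in the strict range, so $p^s(\mu)\le p^r_{j,s'}(\mu)<p^r_{j,s'}(\mu')=p^s(\mu')$. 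If instead $p^s(\mu')=1$, pick $s$ with $b_s=\max_t b_t\ge\mu'>\mu$; then $p^s(\mu)\le p^r_{j,s}(\mu)<1$. At the endpoints the minimum takes the values $0$ and $1$, so it is a bijection onto $[0,1]$. For the inverse formula, note that $p^s_{j,k}(\mu)\le\gamma$ holds iff $p^r_{j,s}(\mu)\le\gamma$ for some $s$, which by (a) holds iff $\mu\le(p^r_{j,s})^{-1}(\gamma)$ for some $s$. This is equivalent to $\mu\le\max_s(p^r_{j,s})^{-1}(\gamma)$. Taking the largest such $\mu$ gives $(p^s_{j,k})^{-1}(\gamma)$, and its continuity and monotonicity follow because it is a maximum of continuous strictly increasing functions.

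The main obstacle is the bookkeeping at the boundary. Specifically, one has to check that the supremum in the definition behaves correctly when $p_{j,k}(\mu_j)$ exceeds $\alpha^*_{j,k}(1)$, and verify the strictness argument for the minimum when different stages have different endpoints $b_s$. I would handle both by the explicit constant-$1$ extension described above.
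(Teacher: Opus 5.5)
Your proof is correct. Part (a) follows the paper's argument: you write $p^r_{j,k}=(\alpha^*_{j,k})^{-1}\circ p_{j,k}$ below the endpoint and invert it, with the supremum, the value at $\gamma=0$ and the region beyond $p_{j,k}^{-1}(\alpha^*_{j,k}(1))$ spelled out more carefully than the paper does.

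For (b) the two routes differ.
\begin{itemize}
\item The paper states an abstract lemma, $\min\{f_1,\dots,f_k\}^{-1}=\max\{f_1^{-1},\dots,f_k^{-1}\}$ for strictly increasing $f_s:[-\infty,c_s]\to[0,1]$. It proves this by checking $g\circ f=\mathrm{id}$ and $f\circ g=\mathrm{id}$ directly. Bijectivity and the inverse formula then come out together, and strict monotonicity of $f$ follows from that of $g$.
\item You first establish continuity, strict monotonicity and surjectivity of the minimum by hand. You then read off the inverse from the sublevel-set identity $\{p^s_{j,k}\le\gamma\}=\bigcup_s\{p^r_{j,s}\le\gamma\}$.
\end{itemize}

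The paper's version is shorter and more general. Yours is more careful exactly where the paper is loose. The paper forms $\min_s f_s$ on $[-\infty,\max_s c_s]$ without saying how $f_s$ is defined beyond $c_s$. Its step ``$f_{s'}^{-1}(f(\mu))=\mu$ for a minimizing $s'$'' also tacitly needs $\mu\le c_{s'}$. Your constant-$1$ extension and your two-case strictness argument supply precisely this.

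One small repair is needed in your sublevel-set chain. With the constant-$1$ extension, the equivalence $p^r_{j,s}(\mu)\le\gamma\iff\mu\le(p^r_{j,s})^{-1}(\gamma)$ fails for $\gamma=1$ and $\mu>b_s$. Either of the following fixes it:
\begin{itemize}
\item Restrict $\mu$ to $[-\infty,\max_s b_s]$ before taking the largest such $\mu$. The existential form of the equivalence still holds there, by choosing $s$ with $b_s=\max_t b_t$.
\item Treat $\gamma=1$ separately, using the endpoint value you already computed.
\end{itemize}
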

\begin{proof}
    First, we prove part (a). Note that the nominal significance levels $\alpha_{j,k}^*(\gamma)$ are assumed to be continuous, strictly increasing, and bijective from $[0,1]$ to $[0,\alpha_{j,k}^*(1)]$ and thus invertible. In addition, for all $\mu_j$ such that $p_{j,k}(\mu_j)\leq\alpha_{j,k}^*(1)$ from the definition of the repeated p-value $p_{j,k}^r=\sup\{\gamma: p_{j,k}(\mu_j)>\alpha_{j,k}^*(\gamma)\}$ it follows that the repeated p-value equals the $\gamma$ solving $p_{j,k}(\mu_j)=\alpha_{j,k}^*(\gamma)$. Thus $p_{j,k}^r(\mu_j)=(\alpha_{j,k}^*)^{-1}(p_{j,k}(\mu_j))$. Because the local p-values are also invertible, we conclude $(p_{j,k}^r)^{-1}(\gamma)=p_{j,k}^{-1}(\alpha_{j,k}^*(\gamma))$, $\gamma\in[0,1]$ showing the statement. 

    We now prove part (b) by showing that, in general, it holds for functions $f_1,\dots,f_k$ with $f_s:[-\infty,c_s]\rightarrow[0,1]$ strictly increasing and existing strictly increasing inverse $f_s^{-1}:[0,1]\rightarrow[-\infty,c_s]$, $s=1,\dots,k$, that
    \begin{align}
    \min\{f_1,\dots,f_k\}^{-1}=\max\{f_1^{-1},\dots,f_k^{-1}\}.\label{statementInverse}
    \end{align}
    
    Define $f:=\min\{f_1,\dots , f_k\}$ and $g:=\max\{f_1^{-1},\dots,f_k^{-1}\}$. First, we argue that $g(f(\mu_j))=\mu_j$ for all $\mu_j\in[-\infty,\max_{s=1,\dots,k}c_s]$. It holds for all $s=1,\dots,k$ that $f(\mu_j)\leq f_s(\mu_j)$ and because the inverse functions $f_s^{-1}$ are strictly increasing it follows $f_s^{-1}(f(\mu_j))\leq f_s^{-1}(f_s(\mu_j))=\mu_j$ which implies $g(f(\mu_j))=\max\{f_1^{-1}(f(\mu_j)),\dots f_k^{-1}(f(\mu_j))\}\leq\mu_j$. For at least one index $s'$ with $1\leq s'\leq k$ it is $f(\mu_j)=f_{s'}(\mu_j)$ yielding $f_{s'}^{-1}(f(\mu_j))=\mu_j$ and thus we have $g(f(\mu_j))=\mu_j$. 
    
    Now we argue that $f(g(\gamma))=\gamma$ for all $\gamma\in[0,1]$. It holds for all $s=1,\dots,k$ that $g(\gamma)\geq f_s^{-1}(\gamma)$. Because the functions $f_s$ are strictly increasing we have $f_s(g(\gamma))\geq f_s(f_s^{-1}(\gamma))=\gamma$ yielding $f(g(\gamma))=\max\{f_1^{-1}(g(\gamma)),\dots,f_k^{-1}(g(\gamma))\}\geq\gamma$. Again, there is at least one index $1\leq s'\leq k$ with $g(\gamma)=f_{s'}^{-1}(\gamma)$ and thus $f_{s'}(g(\gamma))=\gamma$. In summary $f(g(\gamma))=\gamma$ and (\ref{statementInverse}) is true. Together with part (a), we conclude that statement (b) holds true.
\end{proof}

\section{Proof of upper estimate}
We argue in this section that the calculation of the approximation $\rho^{(s)}$ of $L_k^{\lambda}$, $\lambda\in\{r,s\}$ in Algorithm \ref{alg_1} provides indeed an (upper estimate) of the simultaneous confidence bounds. This holds not only true for the graphical group sequential tests from \citealp{maurer2013multiple} but also for all graphical test procedures and p-values satisfying the assumptions in \citealp{informativeSCIBrannathKlugeScharpenberg}.

\begin{lemma}
Under the assumptions of the Primary Algorithm \ref{alg_1}, for all valid starting vectors $\rho^{(0)}$ the sequence $(\rho^{(\ell)})_{\ell\in\mathbb{N}_0}$ is decreasing and converges from above against the SCI bounds $L_k^{\lambda}=(L_{k,1}^{\lambda},\dots, L_{k,m}^{\lambda})$.
\end{lemma}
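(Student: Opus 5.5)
We prove monotonicity by induction on $\ell$ and then identify the limit with $L_k^{\lambda}$ through a continuity argument. Throughout, write $p_j:=p^{\lambda}_{j,k_j^*}$ and
\begin{align*}
h_j(x):=p_j(x)\,q^{-(x\vee 0)},
\end{align*}
which is continuous and strictly increasing in $x$ on the relevant domain; the factor $q^{-(x\vee 0)}$ is non-decreasing. The step (\ref{eq_stepUpper}) then reads $h_j(\rho_j^{(\ell+1)})=\nu_j(\rho^{(\ell)})(\alpha+\delta^{(\ell)})$. Since $h_j$ is strictly increasing, $\rho^{(\ell+1)}$ is uniquely determined and depends monotonically on the right-hand side.

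\textbf{Step 1 (monotone decrease).}
\begin{itemize}
\item[(i)] Base case. The starting condition (\ref{eq_astartUpper}) gives $h_j(\rho_j^{(0)})\geq \nu_j(\rho^{(0)})(\alpha+\delta^{(0)})=h_j(\rho_j^{(1)})$, hence $\rho^{(1)}\leq\rho^{(0)}$.
\item[(ii)] Induction step. Assume $\rho^{(\ell)}\leq \rho^{(\ell-1)}$. By Proposition~\ref{propositionWithImportantProperties}(ii), $\nu_j$ is non-decreasing in each component, and $\delta^{(\ell)}<\delta^{(\ell-1)}$. Therefore
\begin{align*}
h_j(\rho_j^{(\ell+1)})=\nu_j(\rho^{(\ell)})(\alpha+\delta^{(\ell)})\leq \nu_j(\rho^{(\ell-1)})(\alpha+\delta^{(\ell-1)})=h_j(\rho_j^{(\ell)}),
\end{align*}
so $\rho^{(\ell+1)}\leq\rho^{(\ell)}$.
\item[(iii)] Domain issues. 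The case $\nu_j=0$ gives $\rho_j=-\infty$ and must be handled in the extended reals. Bijectivity of $p_j$ on its range, from the appendix lemma, guarantees solvability of each step.
\end{itemize}

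\textbf{Step 2 (the iterates stay above $L_k^{\lambda}$).} We show $\rho^{(\ell)}\geq L_k^{\lambda}$ for all $\ell$, again by induction.

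From \citealp{informativeSCIBrannathKlugeScharpenberg}, $L_k^{\lambda}$ is the fixed point characterised by $h_j(L_j)=\nu_j(L)\alpha$, i.e. the limit of the lower sequence from part (a).

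\begin{itemize}
\item[(i)] Base case. Suppose $\rho^{(0)}\not\geq L$. I would use the projection characterisation of $L$: points $\mu$ with $p_j(\mu_j)>\alpha_j^{\mu}$ for all $j$ lie in $C_k^{\lambda}$. The strict slack $\delta^{(0)}>0$ in (\ref{eq_astartUpper}) should make $\rho^{(0)}$, or points slightly below it, belong to $C_k^{\lambda}$, which forces $L\leq\rho^{(0)}$.

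Concretely, let $\mu:=\min(\rho^{(0)},L)$ componentwise. For each $j$ with $\rho^{(0)}_j<L_j$, monotonicity of $\nu_j$ and strict monotonicity of $h_j$ give
\begin{align*}
h_j(\rho_j^{(0)})\geq\nu_j(\rho^{(0)})(\alpha+\delta^{(0)})>\nu_j(\mu)\alpha,
\end{align*}
provided $\nu_j>0$. On the remaining coordinates, the fixed-point relation of $L$ only yields $h_j(L_j)=\nu_j(L)\alpha\geq\nu_j(\mu)\alpha$, which is an equality rather than a strict inequality. So $\mu$ satisfies non-strict membership conditions, and I would perturb $\mu$ downward slightly in those coordinates to land strictly inside $C_k^{\lambda}$. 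That contradicts the minimality of $L$ as the projection in the coordinates where $\rho^{(0)}<L$.

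\item[(ii)] Induction step. This is easier. If $\rho^{(\ell)}\geq L$, then
\begin{align*}
h_j(\rho_j^{(\ell+1)})=\nu_j(\rho^{(\ell)})(\alpha+\delta^{(\ell)})\geq\nu_j(L)\alpha=h_j(L_j),
\end{align*}
hence $\rho^{(\ell+1)}\geq L$.
\end{itemize}

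\textbf{Step 3 (identification of the limit).} The sequence is decreasing and bounded below, so it converges to some $\rho^*\geq L$. Passing to the limit in (\ref{eq_stepUpper}) uses continuity of $h_j$ and of $\nu_j$, together with $\delta^{(\ell)}\to0$, and yields $h_j(\rho^*_j)=\nu_j(\rho^*)\alpha$. So $\rho^*$ is a fixed point of the lower iteration map.

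It remains to show $\rho^*=L$, and this is the main obstacle. I would first try to invoke part (a) with starting vector $\rho^*$: it satisfies (\ref{eq_astartLower}) with equality, so the lower sequence started there is constant and, by part (a), converges to $L$. This gives $\rho^*=L$, using the uniqueness statement recorded after Theorem~\ref{theoremMainResults}, namely that the limit does not depend on the valid starting vector.

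The delicate points are the base case of Step 2, where the slack $\delta^{(0)}$ is genuinely needed, and the components where $\nu_j$ vanishes.
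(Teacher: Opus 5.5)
\textbf{Overall.} Your Steps 1 and 3 essentially match the paper's proof. The decrease comes from the monotonicity of $\nu_j$ and of $(\delta^{(\ell)})$. The limit is identified through the fixed-point equation $p^{\lambda}_{j,k_j^*}(\rho^*_j)=q^{(\rho^*_j\vee 0)}\nu_j(\rho^*)\alpha$. The paper simply cites uniqueness of its solution from the earlier work. Your derivation of uniqueness from part (a) (a solution is a valid lower start whose lower sequence is constant) is a fine self-contained substitute. This holds as long as part (a) is read in $\mathbb{R}_{-\infty}^m$, which is needed anyway since starting vectors and iterates can have entries $-\infty$.

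\textbf{The step that fails.} The concrete base case of Step 2 does not work as written. With $\mu=\min(\rho^{(0)},L_k^{\lambda})$ you only get $p_j(\mu_j)\geq\alpha_j^{\mu}$, possibly with equality, in the coordinates where $\mu_j=L_{j,k}^{\lambda}$. Perturbing those coordinates \emph{downward} goes the wrong way. Lowering $\mu_j$ strictly lowers $p_j(\mu_j)$ and, because $\alpha_j^{\mu}$ is non-increasing in $\mu_j$, weakly raises $\alpha_j^{\mu}$. So an equality turns into a violation, and you do not land in $C_k^{\lambda}$.

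\textbf{The fix.} The detour through $L_k^{\lambda}$ is unnecessary. Your first sentence already has the right idea, which is exactly the paper's argument: $\rho^{(0)}$ itself belongs to $C_k^{\lambda}$. For every $j$ with $\alpha_j^{\rho^{(0)}}>0$, the start condition and $\delta^{(0)}>0$ give
\[
p_j(\rho_j^{(0)})\geq q^{(\rho_j^{(0)}\vee 0)}\nu_j(\rho^{(0)})(\alpha+\delta^{(0)})>q^{(\rho_j^{(0)}\vee 0)}\nu_j(\rho^{(0)})\alpha=\alpha_j^{\rho^{(0)}},
\]
while indices with $\alpha_j^{\rho^{(0)}}=0$ are excluded from the minimum defining $C_k^{\lambda}$. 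Hence $\rho^{(0)}\geq L_k^{\lambda}$, since $L_{j,k}^{\lambda}=\inf\{\mu_j:\mu\in C_k^{\lambda}\}$.

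The paper applies this to all iterates at once. By Step 1 and the monotonicity of $\nu_j$, $p_j(\rho_j^{(\ell+1)})\geq q^{(\rho_j^{(\ell+1)}\vee 0)}\nu_j(\rho^{(\ell+1)})(\alpha+\delta^{(\ell)})$. So every $\rho^{(\ell)}$ lies in $C_k^{\lambda}$ and no induction is needed; your induction step via the fixed-point equation of $L_k^{\lambda}$ is also correct.

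\textbf{Step 2 is optional.} Given your Step 3, Step 2 is dispensable altogether:
\begin{itemize}
\item each component of a decreasing sequence converges in $\mathbb{R}_{-\infty}$;
\item the limit equals $L_k^{\lambda}$ by your fixed-point argument;
\item a decreasing sequence lies above its limit.
\end{itemize}
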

\begin{proof}
First, we show inductively that the sequence $(\rho^{(\ell)})_{\ell\in\mathbb{N}_0}$ exists and is monotonically decreasing. Let $\ell=0$. The existence of a valid start vector $\rho^{(0)}$ is argued in the main article. Then, starting with equation (\ref{eq_astartUpper}) we have because of $\delta^{(1)}<\delta^{(0)}$ for each $j$ that 
\begin{align}
    p_{j,k_j^*}^{\lambda}(\rho_j^{(0)})\geq q^{(\rho_j^{(0)}\vee 0)}\nu_j(\rho^{(0)})(\alpha+\delta^{(1)})\label{equationproofUpperEntimate1}
\end{align}
holds true, $1\leq j\leq m$. Now, we define the two functions $g_{1,j}:\mathbb{R}_{-\infty}\rightarrow[0,1], \rho_j\mapsto p_{j,k_k^*}^{\lambda}(\rho_j)$, and $g_{2,j}:\mathbb{R}_{-\infty}\rightarrow\mathbb{R}_{\geq 0}, \rho_j\mapsto q^{(\rho_j\vee 0)}\nu_j(\rho^{(0)})(\alpha+\delta^{(1)})$. Here $\mathbb{R}_{-\infty}:=\mathbb{R}\cup\{-\infty\}$. The function $g_{1,j}$ is strictly monotonically increasing in $\rho_j$, and the function $g_{2,j}$ is monotonically decreasing in $\rho_j$. Using these properties and (\ref{equationproofUpperEntimate1}) we have that for an $\rho_j^{(1)}$ such that $g_{1,j}(\rho_j^{(1)})=g_{2,j}(\rho_j^{(1)})$ it must hold $\rho_j^{(1)}\leq \rho_j^{(0)}$. Finding $\rho_j^{(1)}$ such that $g_{1,j}(\rho_j^{(1)})=g_{2,j}(\rho_j^{(1)})$, $1\leq j\leq m$, are the key equations that are to be solved in Algorithm~\ref{alg_1}. Because $g_{1,j}$ tends to $0$ if $\rho_j$ approaches minus infinity and is equal to $0$ if $\rho_j=-\infty$ and $g_{2,j}$ is a non-negative function, we have the existence of a unique $\rho_j^{(1)}\in\mathbb{R}_{-\infty}$ that meets the equation and it holds $\rho_j^{(1)}\leq \rho_j^{(0)}$, $1\leq j\leq m$.

Because we have $\rho^{(1)}\leq\rho^{(0)}$ pointwise and $\nu_j$ is monotonically increasing in all components we obtain for all $1\leq j\leq m$ that 
\begin{align}
    p_{j,k_j^*}^{\lambda}(\rho_j^{(1)})=g_{1,j}(\rho_j^{(1)})= g_{2,j}(\rho_j^{(1)})= q^{(\rho_j^{(1)}\vee 0)}\nu_j(\rho^{(0)})(\alpha+\delta^{(1)})\geq q^{(\rho_j^{(1)}\vee 0)}\nu_j(\rho^{(1)})(\alpha+\delta^{(1)})~.\label{StartingEquationUpper}
\end{align}
Now, we can proceed for $\rho^{(2)}$ like for $\rho^{(1)}$ and we obtain a component-wise decreasing sequence $(\rho^{(\ell)})_{\ell\in\mathbb{N}_0}$. 

As a next step, we show that all elements of the sequence are greater than the projection $L_k^{\lambda}$ (component-wise).  First, equation (\ref{StartingEquationUpper}) holds similarly for all elements of the sequence. Let $\ell\in\mathbb{N}_0$. For all $j\in\{1,\dots,m\}$ such that $\nu_j(\rho^{(\ell)})>0$ it holds 
\begin{align}
    p_{j,k_j^*}^{\lambda}(\rho_j^{(\ell)})\geq q^{(\rho_j^{(\ell)}\vee 0)}\nu_j(\rho^{(\ell)})(\alpha+\delta^{(\ell)})>q^{(\rho_j^{(\ell)}\vee 0)}\nu_j(\rho^{(\ell)})\alpha=\alpha_j^{\rho^{(\ell)}}~.\label{strictInequality}
\end{align}
To obtain the strict inequality in (\ref{strictInequality}), it is important that $\delta^{(\ell)}>0$. If $\nu_j(\rho^{(\ell)})=0$, for the local significance level it obviously holds $\alpha_j^{\rho^{(\ell)}}=q^{(\rho_j^{(\ell)}\vee 0)}\nu_j(\rho^{(\ell)})\alpha=0$. We use now the definition of the confidence set $C_k^{\lambda}$ from (\ref{confidenceSetEachStage}) and conclude that for all $\ell\in\mathbb{N}_0$ we have $\rho^{(\ell)}\in C_k^{\lambda}$. This implies $\rho^{(\ell)}\geq L_k^{\lambda}$ component-wise.  

It remains to show the convergence of the sequence $(\rho^{(\ell)})_{\ell\in\mathbb{N}_0}$ against $L_k^{\lambda}$. For all components $j\in\{1,\dots,m\}$ for whom $L_{k,j}^{\lambda}>-\infty$, the sequence $(\rho_j^{(\ell)})_{\ell\in\mathbb{N}_0}$ converges because it is monotonically decreasing and bounded from below. For the other components with $L_{k,j}^{\lambda}=-\infty$, there are initially two possibilities. Either the sequence is bounded from above by another $c_j>-\infty$ and thus also converges, or it is not bounded and thus diverges against $-\infty=L_{k,j}^{\lambda}$. We show that the latter is always the case. Let $\rho^{\infty}$ be the limit vector of $(\rho^{(\ell)})_{\ell\in\mathbb{N}_0}$ (i.e. the components converge against $\rho^{\infty}_j$ if $\rho^{\infty}_j>-\infty$ or diverge against $\rho^{\infty}_j=-\infty$, $1\leq j\leq m$). It is important that the functions $\mu_j\mapsto q^{\mu_j\vee 0}$ and $\mu\mapsto\nu_j(\mu)$ as well as $\mu\mapsto\alpha_j^{\mu}$, $j\in\{1,\dots,m\}$, can be (uniquely) continuously extended to $\mu\in\mathbb{R}^m_{\infty}$. By using the continuity, the convergence of $(\delta^{(\ell)})_{\ell\in\mathbb{N}_0}$ against $0$ and, in particular, equation (\ref{eq_stepUpper}) from Algorithm \ref{alg_1}, we conclude
\begin{align}
    p_{j,k_j^*}^{\lambda}(\rho_j^{\infty})=q^{(\rho_j^{\infty}\vee 0)}\nu_j(\rho^{\infty})\alpha=\alpha_j^{\rho^{\infty}},\quad\text{for all~}1\leq j\leq m~.\label{equation_unique}
\end{align}
From \citealp{informativeSCIBrannathKlugeScharpenberg} we now that (\ref{equation_unique}) uniquely characterizes $L_k^{\lambda}=(L_{k,1}^{\lambda},\dots L_{k,m}^{\lambda})$. In summary, $\rho^{\infty}=L_k^{\lambda}$ component-wise and the sequence $(\rho^{(\ell)})_{\ell\in\mathbb{N}_0}$ converges from above against $L_k^{\lambda}$ for each valid starting vector $\rho^{(0)}$.

\end{proof}

\section{Further proofs}
\subsection{Properties of the informative bounds for $\lambda=r$}
\begin{lemma}
Let $\lambda=r$ in Algorithm \ref{alg_repeated} and assume for all  $j\in\{1,\dots,m\}$ with $j\in\cup_{k=1}^{\tau^*}R_k^r$ that it holds $\tau_j^*\leq r_j:=\min\{1\leq k\leq\tau^*~|~j\in R_{k}^r\}$, i.e. if a hypothesis is rejected the first time at a certain stage the data collection for this hypothesis is stopped or has already been stopped at an earlier stage. Then, the index sets of rejected hypotheses by Algorithm \ref{alg_repeated} fulfil $R_{k}^r\subseteq R_{k+1}^r$, $1\leq k<\tau^*$. In particular, for all $j\in R_{\tau^*}^{r}$ we have for all $k$ with $r_j\leq k <\tau^*$ that $L_{j,k}^{r}\leq L_{j,k+1}^{r}$, i.e. the confidence bounds increase in these components.
\end{lemma}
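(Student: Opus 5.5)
The plan is to compare the confidence sets $C_k^r$ and $C_{k+1}^r$ from (\ref{confidenceSetEachStage}) through the fixed-point characterization of their projections. Proposition~\ref{propositionWithImportantProperties} and Theorem~\ref{theoremMainResults} give that $L_k^r$ is the unique solution of
$$p^r_{j,k_j^*(k)}(L_{j,k}^r)=q^{(L_{j,k}^r\vee 0)}\nu_j(L_k^r)\alpha,\qquad 1\le j\le m,$$
and that the lower iteration of Algorithm~\ref{alg_1} converges to it from any vector satisfying (\ref{eq_astartLower}). So I will produce, at stage $k+1$, a valid lower starting vector $\mu^{(0)}$ with $\mu^{(0)}_j\geq 0$ for all $j\in R_k^r$ and $\mu^{(0)}_j=L_{j,k}^r$ on $R_k^r$. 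By Theorem~\ref{theoremMainResults}(a), monotone convergence from below then gives $L_{k+1}^r\geq\mu^{(0)}$ componentwise. This yields both $R_k^r\subseteq R_{k+1}^r$ and $L_{j,k}^r\le L_{j,k+1}^r$ on $R_k^r$. Since $j\in R_k^r$ for all $k\ge r_j$ once nestedness holds by induction, the second claim of the lemma follows.

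The candidate starting vector is the one displayed after the lemma. I set $\mu^{(0)}_j=L^r_{j,k}$ for $j\in R_k^r$, and $\mu^{(0)}_j=\min\{0,(p^r_{j,k_j^*(k+1)})^{-1}(\alpha_j)\}$ otherwise. Checking (\ref{eq_astartLower}) splits into two cases.

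For $j\notin R_k^r$ we have $\mu^{(0)}_j\le 0$. Proposition~\ref{propositionWithImportantProperties}(iii) gives $\alpha_j^{\mu^{(0)}}\ge\alpha_j$, and by construction $p^r_{j,k_j^*(k+1)}(\mu_j^{(0)})\le\alpha_j$, so the inequality holds.

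For $j\in R_k^r$, the assumption $\tau_j^*\le r_j\le k$ gives $k_j^*(k+1)=k_j^*(k)=\tau_j^*$. The p-value function of component $j$ is therefore unchanged between stages $k$ and $k+1$, and the fixed-point equation at stage $k$ gives
$$p^r_{j,k_j^*(k+1)}(L^r_{j,k})=q^{(L^r_{j,k}\vee0)}\nu_j(L_k^r)\alpha .$$
Condition (\ref{eq_astartLower}) then requires $\nu_j(L_k^r)\le\nu_j(\mu^{(0)})$. By the monotonicity of $\nu_j$ in Proposition~\ref{propositionWithImportantProperties}(ii), it suffices to show $\mu^{(0)}\ge L_k^r$ componentwise.

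That componentwise inequality is the main obstacle. It holds trivially on $R_k^r$. For $i\notin R_k^r$ we have $L^r_{i,k}<0$ but no control comparing $\mu^{(0)}_i$ with $L^r_{i,k}$, because the data for $H_i$ may have changed. My first attempt uses the structure of the dual graph. When $\mu_i\le 0$, all paths leaving $H_i$ are deleted, so the levels $\alpha_j^\mu$ for $j\neq i$ depend on such $\mu_i$ only through the sign $\mu_i\le 0$, not its value. I will verify from the construction of the dual graphs in \citealp{informativeSCIBrannathKlugeScharpenberg} that $\nu_j(\mu)$, for $j\neq i$, is constant in $\mu_i$ on $(-\infty,0]$. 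Given that, replacing the negative components $L^r_{i,k}$ by the non-positive $\mu^{(0)}_i$ leaves $\nu_j$ unchanged for $j\in R_k^r$, and the inequality $\nu_j(L_k^r)\le\nu_j(\mu^{(0)})$ follows.

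If that constancy fails, the fallback is a two-step iteration. I would first run the stage-$(k+1)$ lower iteration on the components outside $R_k^r$ only, with the components in $R_k^r$ frozen at $L_{k}^r$. Then I would use monotonicity of $\nu$ in the frozen components to show that the resulting vector is still a valid start.

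With the starting vector validated, Theorem~\ref{theoremMainResults}(a) gives the conclusion by induction over $k$.
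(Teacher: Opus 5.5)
Your proposal is correct and follows essentially the same route as the paper's proof. You use the same stage-$(k+1)$ starting vector and the same case split: Proposition~\ref{propositionWithImportantProperties}(iii) for $j\notin R_k^r$, and for $j\in R_k^r$ the unchanged p-value function together with the dual-graph fact that $\alpha_j^{\mu}$ does not depend on $\mu_i$ while $\mu_i\le 0$, which gives equality in (\ref{eq_astartLower}); this is followed by Theorem~\ref{theoremMainResults}(a) and induction over $k$. The constancy you plan to verify is exactly the observation the paper relies on, so your fallback two-step iteration is not needed.
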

\begin{proof}
We show the result inductively. Let $1\leq k<\tau^*$ be the smallest stage such that a hypothesis is rejected, i.e., $R_k^{r}\neq \varnothing$. From \citealp[Appendix B.1]{informativeSCIBrannathKlugeScharpenberg}, we know that
\begin{align}
    p_{j,k_j^*(k)}^r(L_{j,k}^r)=q^{(L_{j,k}^r\vee 0)}\nu_j(L_k^r)\alpha =\alpha_j^{L_k^r}, \quad 1\leq j\leq m,\label{limitEquation}
\end{align}
holds true. We assume that for all $j\in R_k^{r}$ the data collection is stopped at stage $k$ or was already stopped at an earlier stage, i.e. $\tau_j^*\leq k$. Let the trial continue to the next stage $k+1$ as described in Algorithm~\ref{alg_repeated}. We show now that
\begin{align}
    \mu_j^{(0)}=\begin{cases}
             \min\{0, (p_{j,k_j^*(k+1)}^{r})^{-1}(\alpha_j)\}~, & j\notin R_{k}^r\\
            L_{j,k}^r~, & j\in R_{k}^r
		 \end{cases}~,\quad 1\leq j\leq m,\label{startingVectorFromProof}
\end{align}
is a valid starting vector for the lower approximation in Algorithm~\ref{alg_1}) for calculating the bounds $L_{k+1}^r=(L_{1,k+1}^r,\dots, L_{m,k+1}^r)$. We need the following observation: From the construction of the local levels $\alpha_j^{\mu}$, $1\leq j\leq m$, by the dual graphs as described in Section~\ref{sectionCharacteristics} or \citealp{informativeSCIBrannathKlugeScharpenberg}, we can easily see that $\alpha_j^{\mu}$ does not depend on $\mu_i$ as long as $\mu_i\leq 0$, $1\leq i\leq m$. This follows from the fact that in the dual graph there are no arrows starting from $H_i^{\mu_i}$ and going to other hypotheses as long as $\mu_i\leq 0$. Using this, equation (\ref{limitEquation}), and that for all $j\in R_k^r$ it holds $\tau_j^*\leq k$ which implies $k_j^*(k)=k_j^*(k+1)$ and thus $p_{j,k_j^*(k+1)}^r= p_{j,k_j^*(k)}^r$, we can conclude for each $j\in R_k^r$ that
\begin{align*}
    p_{j,k_j^{*}(k+1)}^r(\mu_j^{(0)})=p_{j,k_j^{*}(k+1)}^r(L_{j,k}^r)=q^{(\mu_j^{(0)}\vee 0)}\nu_j(\mu^{(0)})\alpha~.
\end{align*}
Thus, the starting-vector condition for the lower approximation is fulfilled for the components $j\in R_k^r$. Let now $j\notin R_k^r$. As mentioned in Section~\ref{sectionCharacteristics} we have that $\alpha_j^{\mu}\geq \alpha_j$ if $\mu_j\leq 0$. The reason for this is that for $\mu_j\leq 0$ in the dual graph, no level is transferred from $H_j^{\mu_j}$ to other hypotheses, implying that the starting level $\alpha_j$ remains at $H_j^{\mu_j}$. Using this, we can conclude for all $j\notin R_k^r$ that $\alpha_j^{\mu^{(0)}}\geq \alpha_j$ and the definition from (\ref{startingVectorFromProof}) gives $p_{j,k_j^*(k+1)}^r(\mu_j^{(0)})\leq \alpha_j^{\mu^{(0)}}$.

In summary, we have that $\mu^{(0)}$ as defined in (\ref{startingVectorFromProof}) is a valid lower starting vector for calculating the bounds $L_{k+1}^r=(L_{1,k+1}^r,\dots, L_{m,k+1}^r)$ with Algorithm~\ref{alg_1}. This implies using Theorem~\ref{theoremMainResults} that $\mu^{(0)}\leq L_{k+1}^r$ component-wise. In particular we have for all $j\in R_k^{r}$ that $L_{j,k+1}^r\geq L_{j,k}^r\geq 0$ which gives $j\in R_{k+1}^r=\{1\leq j\leq m: L_{j,k+1}^r\geq 0\}$, i.e. $R_k^r\subseteq R_{k+1}^r$. If we update now $k\rightarrow k+1$, we can prove the statement inductively with the same arguments as before.
\end{proof}
\subsection{Properties of the informative bounds for $\lambda=s$}
\begin{lemma}
Let $\lambda=s$ in Algorithm~\ref{alg_repeated} and $2\leq k\leq\tau^*$. A valid starting vector for the lower approximation of $L_k^s$ is then given by $\mu^{(0)}:=L_{k-1}^s$.
\end{lemma}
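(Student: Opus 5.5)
We have to verify that $\mu^{(0)}:=L_{k-1}^s$ meets the starting condition (\ref{eq_astartLower}) for the problem at stage $k$, that is,
\begin{align*}
p_{j,k_j^*(k)}^{s}(L_{j,k-1}^s)\leq q^{(L_{j,k-1}^s\vee 0)}\nu_j(L_{k-1}^s)\alpha,\qquad 1\leq j\leq m .
\end{align*}
The argument has three steps, modelled on the proof for $\lambda=r$ above.

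\textbf{Step 1: the limit equation at stage $k-1$.} By \citealp[Appendix B.1]{informativeSCIBrannathKlugeScharpenberg}, as used in (\ref{limitEquation}), the projected bound solves
\begin{align*}
p_{j,k_j^*(k-1)}^{s}(L_{j,k-1}^s)=q^{(L_{j,k-1}^s\vee 0)}\nu_j(L_{k-1}^s)\alpha=\alpha_j^{L_{k-1}^s},\qquad 1\leq j\leq m .
\end{align*}
This characterization, together with Theorem~\ref{theoremMainResults}(a),(b), is exactly what the limit of Algorithm~\ref{alg_1} satisfies.

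\textbf{Step 2: monotonicity of the sequential p-values across stages.} By definition $k_j^*(k-1)\leq k_j^*(k)$. The sequential p-value $p_{j,k}^s(\mu_j)=\min\{p_{j,s}^r(\mu_j):1\leq s\leq k\}$ is a minimum over a growing set of stages. Hence, for every $\mu_j\in\mathbb{R}$ and every $j$,
\begin{align*}
p_{j,k_j^*(k)}^s(\mu_j)\leq p_{j,k_j^*(k-1)}^s(\mu_j).
\end{align*}
Taking $\mu_j=L_{j,k-1}^s$ and combining with Step 1 gives the required inequality. The right-hand side is the same expression in both cases, since the levels $\alpha_j^{\mu}$ depend only on $\mu$ and not on the data or the stage.

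\textbf{Step 3: the degenerate components.} The only real obstacle is a component with $L_{j,k-1}^s=-\infty$, because then the starting vector is not in $\mathbb{R}^m$ as Algorithm~\ref{alg_1} formally requires. Two cases arise.
\begin{itemize}
\item[(i)] If $L_{j,k-1}^s=-\infty$, then by the limit equation $\alpha_j^{L_{k-1}^s}=0$. Both sides of the starting condition are then $0$ under the continuous extension to $\mathbb{R}_{-\infty}$ used in the proof of the upper estimate. So the condition holds on the extended domain, and the monotone-convergence argument of Theorem~\ref{theoremMainResults}(a) goes through unchanged there.
\item[(ii)] If one insists on real starting values, such a component can be replaced by any real $\mu_j^{(0)}$ with $p^s_{j,k_j^*(k)}(\mu_j^{(0)})\leq\alpha_j^{\mu^{(0)}}$. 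Alternatively one can use $\min\{0,(p_{j,k_j^*(k)}^s)^{-1}(\alpha_j)\}$ as in (\ref{startVectors}), which is justified by Proposition~\ref{propositionWithImportantProperties}(iii).
\end{itemize}

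For (ii) one must still check that lowering one component keeps the other components valid. Let $L_{k-1}^s$ denote the current starting vector, with the affected components already lowered. Lowering a component can reduce $\nu_i$ for $i\neq j$, since $\nu_i$ is non-decreasing in each component. I expect this not to happen, because for $\mu_j\leq 0$ the levels $\alpha_i^\mu$ do not depend on $\mu_j$. This independence holds because the dual graph has no outgoing arrows from $H_j^{\mu_j}$, the same observation used in the $\lambda=r$ proof. In particular, lowering a component that is already at or below $0$ does not affect any $\nu_i$.

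Finally, applying Theorem~\ref{theoremMainResults}(a) gives $L_{k-1}^s\leq L_k^s$ component-wise. This is consistent with the inclusion $C_k^s\subseteq C_{k-1}^s$ noted earlier and serves as a sanity check.
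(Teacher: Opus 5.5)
Your Steps 1 and 2 are exactly the paper's proof, and they already complete it. The paper uses the limit equation for $L_{k-1}^s$ from Appendix B.1 of the earlier paper. It then applies the pointwise inequality $p^s_{j,k_j^*(k)}\le p^s_{j,k_j^*(k-1)}$, which holds because $k_j^*(k-1)\le k_j^*(k)$ and the sequential p-value is a running minimum. Step 3(i) is a harmless clarification consistent with the paper, which works on $\mathbb{R}_{-\infty}$ throughout; its default starting vector (\ref{startVectors}) already has entries $-\infty$ whenever $\alpha_j=0$.

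Step 3(ii), however, is false and should be dropped.
\begin{itemize}
\item If $L_{j,k-1}^s=-\infty$, the limit equation gives $\alpha_j^{L_{k-1}^s}=0$. The level $\alpha_j^{\mu}$ is constant in $\mu_j$ on $[-\infty,0]$. It is also non-increasing in $\mu_j$, by the sum-to-$\alpha$ argument the paper uses for Algorithm~\ref{alg_compromise}.
\item Hence, with the other components kept at $L_{i,k-1}^s$, you get $\alpha_j^{\mu^{(0)}}=0$ for every real $\mu_j^{(0)}$. But $p^s_{j,k_j^*(k)}(\mu_j^{(0)})>0$ for real arguments, so no real replacement satisfies the condition.
\item Your fallback $\min\{0,(p^s_{j,k_j^*(k)})^{-1}(\alpha_j)\}$ does not help either. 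Proposition~\ref{propositionWithImportantProperties}(iii) gives $\alpha_j\le\alpha_j^{L_{k-1}^s}=0$, so the fallback also equals $-\infty$.
\item More generally, if $L_{j,k}^s=-\infty$, Theorem~\ref{theoremMainResults}(a) forces every valid lower start to have $\mu_j^{(0)}=-\infty$, so no real starting vector exists at all.
\end{itemize}
Finally, replacing $-\infty$ by a real value raises a component rather than lowering it. Raising $\mu_j$ can only increase $\nu_i$ for $i\neq j$, so the check in your last paragraph addresses a problem that cannot occur.
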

\begin{proof}
Let $2\leq k\leq\tau^*$. From \citealp[Appendix B.1]{informativeSCIBrannathKlugeScharpenberg}, we know for the confidence bounds at stage $k-1$ that
\begin{align*}
    p_{j,k_j^*(k-1)}^s(L_{j,k-1}^s)=q^{(L_{j,k-1}^s\vee 0)}\nu_j(L_{k-1}^s)\alpha, \quad 1\leq j\leq m,
\end{align*}
holds true. If we use the monotonicity property of the sequential p-values across the stages and $k_j^*(k-1)\leq k_j^*(k)$, $1\leq j\leq m$, we obtain
\begin{align*}
     p_{j,k_j^*(k)}^s(L_{j,k-1}^s)\leq q^{(L_{j,k-1}^s\vee 0)}\nu_j(L_{k-1}^s)\alpha , \quad 1\leq j\leq m.
\end{align*}
Thus, the condition for the lower starting vector in Algorithm~\ref{alg_1} is fulfilled. 
\end{proof}
\subsection{Properties of the informative bounds for GSD with efficient multiple adjustment}
\begin{lemma}
    For Algorithm~\ref{alg_compromise} there exist unique $L_{j,\tau^*}^c\in\mathbb{R}_{-\infty}$, $L_{j,\tau^*}^c\leq L_{j,\tau^*}^s$, fulfilling (\ref{keyEquationMixtureInformative}), $j\in\{1,\dots,m\}$, and a valid search area for the bisection search performed is given by (\ref{startVectorMixtureInformative}).
\end{lemma}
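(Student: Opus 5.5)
For fixed $j$, I would reduce the statement to an intersection of a strictly increasing function with a non-increasing one on $\mathbb{R}_{-\infty}$. This is the same argument used for the upper approximation in the proof of the upper estimate. Define $g_1(\mu_j):=p_{j,\tau_j^*}^r(\mu_j)$ and $g_2(\mu_j):=\tilde\omega_j(\mu_j)\alpha$. By the Appendix lemma on repeated p-values, $g_1$ is continuous and strictly increasing, with $g_1(-\infty)=0$, and it reaches $1$ at a finite point. By Proposition~\ref{propositionWithImportantProperties}(ii), $g_2$ is continuous and bounded by $\alpha<1$, since $\nu_j$ is continuous. It is also non-increasing in $\mu_j$ on $[0,\infty)$, because there $q^{\mu_j}$ decreases and, as in the dual graph construction, the level at $H_j$ does not gain when $\mu_j$ increases. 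On $(-\infty,0]$, $g_2$ is constant: $q^{0}=1$, and by the observation in the proof of Lemma~\ref{LemmaMonotonicityLambdaR}, $\alpha_j^{\mu}$ does not depend on $\mu_j$ while $\mu_j\le 0$. A caveat: (ii) only states that $\nu_j$ is non-decreasing, so I would have to argue from the dual graph that the product $q^{(\mu_j\vee0)}\nu_j$ is non-increasing in $\mu_j$. If that fails, I would instead argue uniqueness directly from strict monotonicity of $g_1-g_2$ on the relevant region.

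Existence and uniqueness then follow by cases. If $\tilde\omega_j(0)\alpha=0$, then $g_2\equiv0$ on $(-\infty,0]$. For $\mu_j>0$ we have $g_2(\mu_j)\le g_2(0)=0$ by monotonicity, so the only solution is $-\infty$, matching the algorithm's choice. If $\tilde\omega_j(0)\alpha>0$, then $h:=g_1-g_2$ is continuous and strictly increasing, hence has at most one zero. Moreover $h(-\infty)=-g_2(0)<0$, and $h$ is positive once $g_1$ exceeds $\alpha$, so a zero exists by the intermediate value theorem.

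For the search interval, I would show first that $h(L_{j,\tau^*}^s)\ge 0$, which gives $L^c_{j,\tau^*}\le L^s_{j,\tau^*}$. By the limit equation (\ref{limitEquation}) for $\lambda=s$, we have $p^s_{j,\tau_j^*}(L^s_{j,\tau^*})=\tilde\omega_j(L^s_{j,\tau^*})\alpha$. Since $p^r\ge p^s$ pointwise, $h(L^s_{j,\tau^*})\ge0$.

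For the lower end, set $a:=(p^r_{j,\tau_j^*})^{-1}(\tilde\omega_j(0)\alpha)$, which is finite because the argument is positive. If $a\le0$, then $g_2(a)=\tilde\omega_j(0)\alpha=g_1(a)$, so $a$ is itself the root. If $a>0$, then $g_2(a)\le\tilde\omega_j(0)\alpha=g_1(a)$, so $h(a)\ge0$ and the root lies below $a$. This second case contradicts the claimed interval. I would therefore compare carefully: evaluating at $\mu_j=0$ instead gives $h(0)=g_1(0)-\tilde\omega_j(0)\alpha$, which has the sign of $0-a$. Hence the root lies between $\min\{a,0\}$ and $\max\{a,0\}$, and it always lies at or below the upper end $L^s_{j,\tau^*}$. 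The main obstacle is to establish correctly the orientation of the lower end of the search interval (\ref{startVectorMixtureInformative}) together with the monotonicity of $\tilde\omega_j$. I expect $\min\{a,0\}$, or $a$ in the regime $a\le 0$, to be the valid lower end of the bracket.
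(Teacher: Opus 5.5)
Your existence and uniqueness argument, and your upper end $L^c_{j,\tau^*}\le L^s_{j,\tau^*}$, follow the paper's route: the limit equation for $\lambda=s$, the inequality $p^r\ge p^s$, and a strictly increasing $g_1$ crossing a non-increasing $g_2$. However, you leave the key ingredient open, namely that $\tilde\omega_j$ is monotone in $\mu_j$. Your dual-graph heuristic does not establish it, because $\nu_j$ itself can increase in $\mu_j$. For example, in the cycle $H_1\leftrightarrow H_2$ with $\mu_1,\mu_2>0$, level passed from $H_1$ to $H_2$ partly returns to $H_1^{\mu_1}$. Your fallback ("strict monotonicity of $g_1-g_2$") presupposes exactly this monotonicity, so it is not an independent argument. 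The missing step follows from Proposition~\ref{propositionWithImportantProperties}, and it is what the paper uses:
\begin{itemize}
\item for $\ell\neq j$, $\alpha_\ell^{\mu}=q^{(\mu_\ell\vee 0)}\nu_\ell(\mu)\alpha$ is non-decreasing in $\mu_j$ by (ii), since the $q$-factor does not involve $\mu_j$;
\item by (i), $\alpha_j^{\mu}=\alpha-\sum_{\ell\neq j}\alpha_\ell^{\mu}$ is therefore non-increasing in $\mu_j$.
\end{itemize}

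Your doubt about the search area is justified: (\ref{startVectorMixtureInformative}) as printed is false in general, so you should not try to prove it. The paper's proof only asserts the interval after noting that $\tilde\omega_j$ is constant on $[-\infty,0]$. Constancy together with monotonicity gives $\tilde\omega_j\le\tilde\omega_j(0)$ everywhere. Hence $p^r_{j,\tau_j^*}(L^c_{j,\tau^*})\le\tilde\omega_j(0)\alpha$, i.e.\ $L^c_{j,\tau^*}\le a$, so the printed lower end is actually an upper bound.

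A concrete counterexample is Holm with $m=2$ ($\alpha_1=\alpha_2=\alpha/2$, $g_{12}=g_{21}=1$) and data with $L^s_{2,\tau^*}\le 0$.
\begin{itemize}
\item The dual graph gives $\tilde\omega_1(\mu_1)\alpha=q^{(\mu_1\vee 0)}\alpha/2$.
\item If $p^r_{1,\tau_1^*}(0)<\alpha/2$, then $a>0$ and the root is positive.
\item So $p^r_{1,\tau_1^*}(L^c_{1,\tau^*})=q^{L^c_{1,\tau^*}}\alpha/2<\alpha/2=p^r_{1,\tau_1^*}(a)$ forces $L^c_{1,\tau^*}<a$.
\item If moreover $\tau_1^*=1$, then $p^r=p^s$ and $L^c_{1,\tau^*}=L^s_{1,\tau^*}<a$, so the printed interval is empty.
\end{itemize}

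Your corrected bracket is right: the root equals $a$ if $a\le 0$, and lies in $(0,\min\{a,L^s_{j,\tau^*}\}]$ if $a>0$. An alternative finite lower end is $(p^r_{j,\tau_j^*})^{-1}(\tilde\omega_j(L^s_{j,\tau^*})\alpha)$. It is valid because $L^c_{j,\tau^*}\le L^s_{j,\tau^*}$ and the monotonicity of $\tilde\omega_j$ give $p^r_{j,\tau_j^*}(L^c_{j,\tau^*})\ge\tilde\omega_j(L^s_{j,\tau^*})\alpha$.
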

\begin{proof}
From \citealp[Appendix~B.1]{informativeSCIBrannathKlugeScharpenberg} we know that
\begin{align*}
    p_{j,\tau^*_j}^s(L_{j,\tau^*}^s)=q^{(L_{j,\tau^*}^s\vee 0)}\nu_j(L_{\tau^*}^s)\alpha=\alpha_j^{L_{\tau^*}^s},\quad 1\leq j\leq m~.
\end{align*}
By using $p_{j,\tau_j^*}^r\geq p_{j,\tau_j^*}^s$ and the definition of $\tilde{\omega}_j$ from Algorithm~\ref{alg_compromise} we have
\begin{align*}
      p_{j,\tau^*_j}^r(L_{j,\tau^*}^s)\geq q^{(L_{j,\tau^*}^s\vee 0)}\nu_j(L_{\tau^*}^s)\alpha=\tilde{\omega}_j(L_{j,\tau^*}^s)\alpha,\quad 1\leq j\leq m~.
\end{align*}
Now we need the following observation. From Proposition~\ref{propositionWithImportantProperties} (ii) or \citealp[Proposition~4,]{informativeSCIBrannathKlugeScharpenberg} it follows that for each $j\in\{1,\dots,m\}$ it holds that $\alpha_j^{\mu}=q^{\mu_j\vee 0}\nu_j(\mu)\alpha$ is increasing in all components $i\neq j$ because this the case for $\nu_j$ and $q^{\mu_j\vee 0}$ does not depend on the $i$-th component. Due to $\sum_{\ell=1}^m\alpha_{\ell}^{\mu}=\alpha$ it holds that $\alpha_j^{\mu}$ is decreasing in $\mu_j$. Thus, we obtain that $\tilde{\omega}_j(\mu_j)$ is a decreasing function of $\mu_j$. In addition for $g_{1,j}:\mathbb{R}_{-\infty}\rightarrow[0,1]$, $\mu_j\mapsto p_{j,\tau_j^*}^r(\mu_j)$, we have that $g_{1,j}$ approaches $0$ if $\mu_j$ approaches $-\infty$. Using all these arguments we find a unique $L_{j,\tau^*}^c\in\mathbb{R}_{-\infty}$ such that $$ p_{j,\tau^*_j}^r(L_{j,\tau^*}^c)= \tilde\omega_j(L_{j,\tau^*}^c)\alpha$$ is met. On $\mu_j\in [-\infty, 0]$ the level $\tilde{\omega}_j(0)\alpha$ is constant implying that in the case of $\tilde{\omega}_j(0)\alpha>0$ the unique solution $L_{j,\tau^*}^c$ is greater than $-\infty$ and lies in 
\begin{align*}
L_{j,\tau^*}^c\in\left[(p_{j,\tau_j^*}^r)^{-1}(\tilde\omega_j(0)\alpha)\,, L_{j,\tau^*}^s \right]~.
\end{align*}
In the other case, the unique solution is given by $L_{j,\tau^*}^c=-\infty$ and no bisection search is necessary. 
\end{proof}

\begin{proposition}\label{PropositionInformativenessMixture}
The simultaneous confidence bounds $L_{\tau^*}^c$ obtained by Algorithm~\ref{alg_compromise} satisfy Definition~\ref{definitionInformativität} with the understanding of evidence from Definition~\ref{DefinitionEvidenzMixture}.
\end{proposition}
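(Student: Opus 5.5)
We verify (a) and (b) of Definition~\ref{definitionInformativität} separately. Throughout we use the characterization from the preceding lemma: $L_{j,\tau^*}^c$ is the unique solution in $\mathbb{R}_{-\infty}$ of $p_{j,\tau_j^*}^r(\mu_j)=\tilde\omega_j(\mu_j)\alpha$. Here $p_{j,\tau_j^*}^r$ is strictly increasing and $\tilde\omega_j$ is non-increasing in $\mu_j$ and non-decreasing in each $L_{i,\tau^*}^s$, $i\neq j$. The latter monotonicity holds because $\nu_j$ is non-decreasing in every component, by Proposition~\ref{propositionWithImportantProperties}~(ii).

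For part (a), we show $L_{j,\tau^*}^c>-\infty$ whenever $\tilde\omega_j(0)>0$; this is exactly Remark~\ref{remarkRealValuedSearchAreaBisectionMixture}. It therefore suffices to show $\tilde\omega_j(0)>0$ in the two cases named in (a). If $H_j$ has no gatekeeper, then $\alpha_j>0$. By Proposition~\ref{propositionWithImportantProperties}~(iii), any $\mu$ with $\mu_j=0$ satisfies $\alpha_j^\mu\ge\alpha_j>0$, and hence $\tilde\omega_j(0)\alpha\geq\alpha_j>0$. Now suppose some gatekeeper $H_i$ of $H_j$ has $L_{i,\tau^*}^c>0$. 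Since $L_{i,\tau^*}^c\le L_{i,\tau^*}^s$, also $L_{i,\tau^*}^s>0$. The same dual-graph argument as in the single-stage case of \citealp{informativeSCIBrannathKlugeScharpenberg} then applies: a positive shift in the gatekeeper component passes positive level, through the factor $1-q^{\mu_i}>0$, along the path to $H_j$. Hence $\nu_j$ evaluated at $(L^s_{-j},0)$ is positive. We would cite the corresponding part of the informativeness proof in \citealp{informativeSCIBrannathKlugeScharpenberg} for this positivity rather than redo the dual graph bookkeeping.

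For part (b), let $x'$ carry more evidence against $H_j$ than $x$, with the same or stronger evidence for all $i\neq j$ in the sense of Definition~\ref{DefinitionEvidenzMixture}, and with $L_{j,\tau^*}^c(x)>-\infty$. The argument has three steps.
\begin{enumerate}
\item Condition (c) of Definition~\ref{DefinitionEvidenzMixture} gives $p^s_{i,\tau_i^*}(x',\cdot)\le p^s_{i,\tau_i^*}(x,\cdot)$ for all $i$, with equal stopping stages. Hence $C^s(x')\subseteq C^s(x)$, since $\alpha_i^\mu$ does not depend on the data. Projecting gives $L^s_{\tau^*}(x')\ge L^s_{\tau^*}(x)$ componentwise.
\item By the monotonicity of $\nu_j$, step 1 yields $\tilde\omega_j^{x'}(\mu_j)\ge\tilde\omega_j^{x}(\mu_j)$ for all $\mu_j$.
\item Put $\mu_j^*=L_{j,\tau^*}^c(x)$. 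Since $\mu_j^*>-\infty$, we have $p^r_{j,\tau_j^*}(x,\mu_j^*)=\tilde\omega^x_j(\mu_j^*)\alpha\le\alpha<1$. Condition (b) of Definition~\ref{DefinitionEvidenzMixture} then gives $p^r_{j}(x',\mu_j^*)<p^r_j(x,\mu_j^*)=\tilde\omega_j^x(\mu_j^*)\alpha\le\tilde\omega_j^{x'}(\mu_j^*)\alpha$. The left side is strictly increasing in $\mu_j$ and the right side is non-increasing, so the crossing point for $x'$ lies strictly to the right: $L^c_{j,\tau^*}(x')>\mu_j^*$.
\end{enumerate}

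The main obstacle is the gatekeeper case of (a). Informativeness in Definition~\ref{definitionInformativität} refers to gatekeepers through $L^c$, while $\tilde\omega_j$ is built from $L^s$, so the argument needs the inequality $L^c\le L^s$ (from the preceding lemma) together with the positivity property of the dual graph levels. A secondary subtlety is in (b): the strict inequality must use the point $\mu_j^*$ where the repeated p-value is below $1$, which is exactly why we need $\tilde\omega_j^x(\mu_j^*)\alpha\le\alpha<1$.
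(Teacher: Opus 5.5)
Your proposal is correct and follows the paper's proof essentially step by step. Part (a) uses Remark~\ref{remarkRealValuedSearchAreaBisectionMixture}, $L^c_{i,\tau^*}\le L^s_{i,\tau^*}$ and the level transfer from a gatekeeper in the dual graph. Part (b) uses $C^s_{\tau^*}(x')\subseteq C^s_{\tau^*}(x)$, the resulting monotonicity of $\tilde\omega_j$, and the single-crossing argument.

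In step 3 you are more careful than the paper, which ignores the restriction in Definition~\ref{DefinitionEvidenzMixture}(b). However, that restriction concerns the $x'$ p-value, not the $x$ p-value, so you still need $p^r_{j,\tau_j^*}(x',\mu_j^*)<1$. This follows by contradiction: if it failed, let $\mu_0:=\min\{\mu: p^r_{j,\tau_j^*}(x',\mu)=1\}\le\mu_j^*$. Then $p^r_{j,\tau_j^*}(x',\mu)<p^r_{j,\tau_j^*}(x,\mu)\le p^r_{j,\tau_j^*}(x,\mu_j^*)$ for $\mu<\mu_0$, and letting $\mu\uparrow\mu_0$ with continuity gives $1\le p^r_{j,\tau_j^*}(x,\mu_j^*)<1$.
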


\begin{proof}
In the following we use the notations $L_{j,\tau^*}^c(x)$ and $L_{j,\tau^*}^s(x)$, $j\in\{1,\dots,m\}$, to indicate the data dependency of the SCI bounds. We also use the notation  $\tilde{\omega}_j(\mu_j,x)$ to indicate the data dependency of the weights on $x$ because of the dependency on $L_{i,\tau^*}^s(x)$, $i\in\{1,\dots,m\}$, $i\neq j$.

First, part (a) of Definition~\ref{definitionInformativität} is shown. If a hypothesis $H_j$, $j\in\{1,\dots,m\}$ has no gatekeeper it holds $\alpha_j>0$ implying that $\tilde{\omega}_j(0)\alpha\geq\alpha_j>0$ and thus $L_{j,\tau^*}^c>-\infty$ due to Remark~\ref{remarkRealValuedSearchAreaBisectionMixture}. If for a gatekeeper $H_i$ for a hypothesis $H_j$ (in the original graph), $i,j\in\{1,\dots,m\}$, $i\neq j$, it holds $L_{i,\tau^*}^c>0$ this implies $L_{i,\tau^*}^s>0$ because of $L_{i,\tau^*}^s\geq L_{i,\tau^*}^c$. This means that some level is transferred from the $i$-th component to the $j$-th component, yielding in particular that $\tilde{\omega}_j(0)\alpha>0$. This implies $L_{j,\tau^*}^c>-\infty$, again by Remark~\ref{remarkRealValuedSearchAreaBisectionMixture}.

To prove part (b) of Definition~\ref{definitionInformativität}, let $x,x^{\prime}$ be two data sets and $j\in\{1,\dots,m\}$ a component such that conditions (i) and (iii) of Definition~\ref{definitionInformativität} are met. For all other components $i\in\{1,\dots,m\}$, $i\neq j$, let part (ii) of Definition~\ref{definitionInformativität} be met. Because of (i) and (ii), we find, looking at Definition~\ref{DefinitionEvidenzMixture}, that for all components $\ell\in\{1,\dots,m\}$ it holds $p_{\ell,\tau_{\ell}^*(x^{\prime})}^s(x^{\prime},\mu_{\ell})\leq p_{\ell,\tau_{\ell}^*(x)}^s(x,\mu_{\ell})$ for all $\mu_{\ell}\in\mathbb{R}$. This gives $C_{\tau^*}^{s}(x^{\prime})\subseteq C_{\tau^*}^{s}(x)$ for the confidence set from (\ref{confidenceSetRepeatlyCallingPrimary}) implying that for the projections it holds $L_{\tau^*}^s(x)\leq L_{\tau^*}^s(x^{\prime})$ component-wise. Now we have that $\tilde{\omega}_j(\mu_j,x)\leq \tilde{\omega}_j(\mu_j,x^{\prime})$ for all $\mu_j\in\mathbb{R}.$ Using (iii) of Definition~\ref{definitionInformativität}, i.e. $L_{j,\tau^*}^c(x)>-\infty$, we obtain $\tilde{\omega}_j(0,x)\alpha>0$. This yields that the solution $L_{j,\tau^*}^c(x)$ of $p_{j,\tau^*_j(x)}^r(L_{j,\tau^*}^c(x))=\tilde{\omega}(L_{j,\tau^*}^c(x),x)\alpha$ and the solution $L_{j,\tau^*}^c(x^{\prime})$ of $p_{j,\tau^*_j(x^{\prime})}^r(L_{j,\tau^*}^c(x^{\prime}))=\tilde{\omega}(L_{j,\tau^*}^c(x^{\prime}),x^{\prime})\alpha$ are both greater than $-\infty$. By knowing this and using part (i) of Definition~\ref{definitionInformativität}, i.e. $p_{j,\tau_j^{*}(x^{\prime})}(x^{\prime},\mu_j)<p_{j,\tau_j^{*}(x)}(x,\mu_j)$ for all $\mu_j\in\mathbb{R}$, we obtain that $L_{j,\tau^*}^c(x^{\prime})>L_{j,\tau^*}^c(x)$. This proves part (b) of Definition~\ref{definitionInformativität}.
\end{proof}

\begin{remark}
    One can see from the proof above that the bounds $L_{\tau^*}^c$ obtained by Algorithm~\ref{alg_compromise} also satisfy a stricter informativeness definition. For instance, it can easily be seen that part (a) of Definition~\ref{definitionInformativität} is also fulfilled when for a hypothesis $H_j$ with at least one gatekeeper $H_i$ it holds $L_{i,\tau^*}^s>0$ instead of $L_{i,\tau^*}^c>0$, $j,i\in\{1,\dots,m\}$, $i\neq j$. This follows from the fact that if for a gatekeeper it holds $L_{i,\tau^*}^s>0$ this implies $\tilde{\omega}_j(0)\alpha>0$. Then, due to Remark~\ref{remarkRealValuedSearchAreaBisectionMixture} we have $L_{j,\tau^*}^c>-\infty$.

    In addition, $L_{j,\tau^*}^c(x^{\prime})>L_{j,\tau^*}^c(x)$, $j\in\{1,\dots,m\}$ in part (b) of Definition~\ref{definitionInformativität} is also met if instead of (ii) a weaker condition is fulfilled. With regard to the understanding of evidence from Definition~\ref{DefinitionEvidenzMixture} only part (a) and (c) must be fulfilled for all components $i\neq j$ because they shape $\tilde{\omega}_j(\mu_j)\alpha$, $\mu_j\in\mathbb{R}$.
\end{remark}

\end{document}